\def\@fnsymbol#1{\ensuremath{\ifcase#1\or *\or
   \mathsection\or \mathparagraph\or \|\or **\or \dagger\dagger
   \or \ddagger\ddagger \else\@ctrerr\fi}}
\tikzstyle{overbrace text style}=[font=\tiny, above, pos=.5, yshift=5pt]
\tikzstyle{overbrace style}=[decorate,decoration={brace,raise=5pt,amplitude=3pt}]
\newtheorem{theorem}{Theorem}[section]
\newtheorem{lemma}[theorem]{Lemma}
\theoremstyle{definition}
\newtheorem*{comment*}{Comment}
\newcommand{\cost}{\text{cost}}
\newcommand{\SC}{\text{SC}}
\newcommand{\bw}{\mathbf{w}}
\newcommand{\bo}{\mathbf{o}}
\newcommand{\bx}{\mathbf{x}}
\title{\bf Agent-Constrained Truthful Facility Location Games\thanks{A preliminary version of this paper will appear in the Proceedings of the 17th International Symposium on Algorithmic Game Theory (SAGT), 2024.}}
\author[1]{Argyrios Deligkas}
\author[2]{Mohammad Lotfi}
\author[3]{Alexandros A. Voudouris\thanks{Corresponding author. Email: alexandros.voudouris@essex.ac.uk}}
\affil[1]{Royal Holloway University of London, UK}
\affil[2]{Sharif University of Technology, Iran}
\affil[3]{University of Essex, UK}
\date{}
\begin{document}

\allowdisplaybreaks

\maketitle

\begin{abstract}
We consider a truthful facility location game in which there is a set of agents with private locations on the line of real numbers, and the goal is to place a number of facilities at different locations chosen from the set of those reported by the agents. Given a feasible solution, each agent suffers an individual cost that is either its total distance to all facilities (sum-variant) or its distance to the farthest facility (max-variant). For both variants, we show tight bounds on the approximation ratio of strategyproof mechanisms in terms of the social cost, the total individual cost of the agents. 

\medskip
\noindent 
{\bf Keywords:} Mechanism design; Facility location; Approximation ratio.
\end{abstract}

\section{Introduction}\label{sec:intro}

Suppose you are the mayor of a small town and your task is to decide where to build a park and a library on a very busy street to accommodate the needs of the citizens. One way to make this decision is to simply place the facilities arbitrarily. Even though this is easy to implement, the chosen locations might not be very accessible and the citizens most probably will end up complaining and not vote for you in the next election. Instead, you could ask the citizens to suggest the possible locations where the facilities could be built and choose one that collectively satisfies them. While this now seems sufficient enough to get you re-elected, you also need to make sure that the citizens are incentivized to truthfully suggest their real ideal locations and not lie in order to minimize the distance they have to walk. This is known as the {\em truthful facility location game}. 

Since the seminal work of \citet{procaccia09approximate} on {\em approximate mechanism design without money}, many different variants of the game have been studied under assumptions about the number of facilities to be placed, the preferences of the agents for the facilities, and the feasible locations where the facilities can be built; we refer the reader to the survey of \citet{fl-survey} for an overview, and to our discussion of related work below. In this work we consider a previously unexplored, yet fundamental model where the facilities can be built at locations that are dynamically proposed by the agents, in contrast to previously studied models where the facilities could be placed either at any location on the line or only at a predetermined set of fixed candidate locations. 

\subsection{Our Model}
We consider the following agent-constrained truthful facility location game. An instance $I$ consists of a set of $n \geq 2$ \textit{agents} with \textit{private locations} on the line of real numbers, and $k \geq 2$ \textit{facilities} that can be placed at {\em different} locations chosen from the (multi-)set of locations reported by the agents. 
Given a feasible {\em solution} $\bx$ which determines the agent locations where the $k$ facilities are placed, each agent $i$ suffers an \textit{individual cost}. We consider two different models that differ on the cost function of the agents. 
In the {\em sum-variant}, the cost of $i$ in instance $I$ is its {\em total} distance from the facilities: 
$$\cost_i^{\text{sum}}(\bx|I) = \sum_{x \in \bx} d(i,x),$$
where $d(i,x) = |i-x|$ is the distance between the location of agent $i$ and point $x$ on the line. 
In the {\em max-variant}, the cost of $i$ in instance $I$ is its distance to the {\em farthest} facility: 
$$\cost_i^{\text{max}}(\bx|I) = \max_{x \in \bx}\{d(i,x)\}.$$
Whenever the variant we study is clear from context, we will drop the sum and max from notation, and simply write $\cost_i(\bx|I)$ for the individual cost of $i$ when solution $\bx$ is chosen; similarly, we will drop $I$ from notation when the instance is clear from context and write $\cost_i(\bx)$.
We are interested in choosing solutions that have a small effect in the overall cost of the agents, which is captured by the {\em social cost} objective function, defined as:
$$\SC(\bx|I) = \sum_i \cost_i(\bx|I).$$
A solution can also be {\em randomized} in the sense that it is a probability distribution $\mathbf{p} = (p_\bx)_{\bx}$ over all feasible solutions; the {\em expected social cost} of such a randomized solution is defined appropriately as 
$$\mathbb{E}[\SC(\mathbf{p}|I)] = \sum_{\bx} p_\bx \cdot \SC(\bx|I).$$

The solution is decided by a {\em mechanism} based on the locations reported by the agents; let $M(I)$
be the solution computed by a mechanism $M$ when given as input an instance $I$. A mechanism $M$ is said to be {\em strategyproof} if no agent $i$ can misreport its true location and decrease its individual cost; that is, 
$$\cost_i(M(I)|I) \leq \cost_i(M(J)|I)$$ 
for every pair of instances $I$ and $J$ that differ only on the location reported by agent $i$. In case the mechanism is randomized, then it is said to be {\em strategyproof-in-expectation} if no agent $i$ cannot misreport its true location and decrease its {\em expected} individual cost.

The {\em approximation ratio} of a mechanism is the worst-case ratio (over all possible instances) of the (expected) social cost of the chosen solution over the minimum possible social cost: 
\begin{align*}
    \sup_{I} \frac{\mathbb{E}[\SC(M(I)|I)]}{\min_{\bx} \SC(\bx|I)}.
\end{align*}
Our goal is to design mechanisms that are strategyproof and achieve an as small approximation ratio as possible. 


\subsection{Our Contribution}
For both individual cost variants, we show tight bounds on the best possible approximation ratio that can be achieved by strategyproof mechanisms. We start with the case of $k=2$ facilities for which we study both deterministic and randomized mechanisms. For the sum-variant, in Section~\ref{sec:sum}, we show a tight bound of $3/2$ for deterministic mechanisms and a bound of $10-4\sqrt{5}\approx 1.0557$ for randomized ones. For the max-variant, in Section~\ref{sec:max}, we show bounds of $3$ and $2$ on the approximation ratio of deterministic and randomized mechanisms, respectively. An overview of our bounds for $k=2$ is given in Table~\ref{tab:overview}.
In Section~\ref{sec:multiple}, we switch to the general case of $k$ facilities and focus exclusively on deterministic mechanisms. For the sum-variant, we show that the approximation ratio is between $2-1/k$ and $2$, while for the max-variant, we show a tight bound of $k+1$. 

\renewcommand{\arraystretch}{1.3}
\begin{table}[t]
    \centering
    \begin{tabular}{c|cc}
                      & Deterministic & Randomized \\ 
       \hline
       sum-variant    & $3/2$ & $1.0557$ \\
       max-variant    & $3$   & $2$ \\
       \hline
    \end{tabular}
    \caption{An overview of the tight bounds on the approximation ratio of deterministic and randomized strategyproof mechanisms for the sum- and max-variants when $k=2$.}
    \label{tab:overview}
\end{table}

Our upper bounds follow by appropriately defined {\em statistic-type} mechanisms that choose the agent locations where the facilities will be placed according to the ordering of the agents on the line from left to right. In particular, for $k=2$, our mechanisms locate one facility at the median agent $m$ and the other either at the agent $\ell$ that is directly to the left of $m$ or the agent $r$ that is directly to the right of $m$. To be even more specific, our deterministic mechanism always chooses the solution $(m,r)$, while our randomized mechanisms choose the solutions $(\ell,m)$ and $(m,r)$ according to some probability distribution. Interestingly, for the sum-variant, it turns out that the probabilities are functions of the distances $d(\ell,m)$ and $d(m,r)$; to the best of our knowledge, this is one of few settings in which the best possible randomized strategyproof mechanism is not required to assign fixed, constant probabilities. For the general case of $k$ facilities, our (deterministic) upper bounds for both variants follow by a mechanism that is a natural generalization of the one for $k=2$; in particular, the mechanism places the facilities around the median agent(s) within a radius of about $k/2$. 

\subsection{Related Work}
Truthful facility location games have a long history within the literature of {\em approximate mechanism design without money}, starting with the paper of \citet{procaccia09approximate}. Various different models have been studied depending on parameters such as
the number of facilities whose location needs to be determined~\citep{procaccia09approximate,Lu2010two-facility,fotakis2014two}, 
whether the facilities are obnoxious~\citep{cheng2013obnoxious}, 
whether the agents have different types of preferences over the facilities (for example, optional~\citep{chen2020max,kanellopoulos2023discrete,li2020constant,serafino2016}, fractional~\citep{fong2018fractional}, or hybrid~\citep{feigenbaum2015hybrid}), 
and whether there are other limitations or features (for example, the facilities might only be possible to be built at specific fixed locations~\citep{feldman2016voting,gai2024mixed,kanellopoulos2025,Xu2021minimum}, there might be limited resources that can be used to build some of the available facilities rather than all~\citep{deligkas2023limited}, there might be limited available information during the decision process~\citep{chan2023ordinal,filos2024distributed}, or there might be even more information in the form of predictions about the optimal facility locations which can be leveraged~\citep{agrawal2022predictions,Xu2022predictions}). 
We refer the reader to the survey of \citet{fl-survey} for more details on the different dimensions along which truthful facility location games have been studied over the years.

When there are multiple facilities to locate, the typical assumption about the individual behavior of the agents is that they aim to minimize their distance to the closest facility~\citep{procaccia09approximate,Lu2010two-facility,fotakis2014two,Tang2020candidate,Xu2022predictions}; such a cost model essentially assumes that the facilities are homogeneous (in the sense that they offer the same service) and thus each agent is satisfied if it is close enough to one of them. In contrast, both variants (sum and max) we consider here model different cases in which the facilities are heterogeneous (in the sense that they offer different services) and each agent aims to minimize either the total or the maximum distance to the facilities. These variants have also been considered in previous works under different assumptions than us; in particular, the sum-variant has been studied by \citet{serafino2016,kanellopoulos2025,gai2024mixed} and \citet{Xu2021minimum}, while the max-variant has been studied by \citet{chen2020max,Zhao2023constrained} and \citet{lotfi2024max}. 

The main differences between our work and the aforementioned ones are the following: 
In most of these papers, the agents have optional preferences over the facilities; that is, some agents approve one facility and are indifferent to the other, while some agents approve both facilities. Here, we focus exclusively on the fundamental case where all agents approve both facilities. In addition, some of these papers study a constrained model according to which the facilities can only be built at different locations chosen from a set of fixed, predetermined candidate ones. In our model, the facilities can also only be built at different locations, which, however, are chosen from the set of locations that are reported by the agents; this is a more dynamic setting in the sense that the candidate locations could change if agents misreport.
\footnote{We highlight that that the two models are different in terms of the results as well. For example, in the sum-variant, we here have shown a tight bound of $3/2$ for the case of two facilities, while previous work has shown a lower bound of $1+\sqrt{2}\approx 2.42$ on the approximation ratio of deterministic strategyproof mechanisms for the candidate location model~\citep{kanellopoulos2025}.} 
We remark that, in continuous facility location settings (where the facilities can be placed anywhere on the line) such as those studied in the original paper of \citet{procaccia09approximate} and follow-up work, the class of strategyproof mechanisms consists of mechanisms that place the facilities at agent locations (according to their ordering) and mechanisms that place the facilities at fixed locations disregarding the agent input. However, to the best of our knowledge, there has not been any previous work that has studied the model where the candidate locations are restricted to the ones reported by the agents and feasible solutions are those where the facilities are placed at different agent locations, an assumption that crucially affects the social cost value of the optimal solution.

\section{Sum-variant for Two Facilities} \label{sec:sum}
We start the presentation of our technical results with the case of $k=2$ facilities and the sum-variant. Recall that in this variant the individual cost of any agent is its distance from both facilities. We will first argue about the structure of the optimal solution; this will be extremely helpful in bounding the approximation ratio of our strategyproof mechanisms later on. We start with the case where the number of agents $n$ is an even number, for which the optimal solution is well-defined and actually leads to an optimal strategyproof mechanism. 

\begin{lemma}\label{lem:sum:even}
For any even $n \geq 2$, an optimal solution is to place the facilities at the two median agents.
\end{lemma}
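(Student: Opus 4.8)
The plan is to exploit a \emph{separability} of the social cost in the sum-variant: since each agent's individual cost is the sum of its distances to the two facilities, the social cost of any solution $\bx=(x,y)$ splits additively as
$$\SC(\bx) = \sum_i \big( d(i,x) + d(i,y) \big) = f(x) + f(y), \qquad \text{where } f(z) := \sum_i d(i,z).$$
Here $f$ is the classical single-facility (1-median) cost function. Thus the two-facility problem reduces to choosing two reported locations $x \neq y$ that minimize $f(x)+f(y)$, and because the two summands depend on $x$ and $y$ independently, we may reason about each separately.

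Next I would recall the standard structure of $f$. As a sum of absolute-value terms it is convex and piecewise linear in $z$; writing the sorted locations as $a_1 \le \cdots \le a_n$, its right-derivative at $z$ equals the number of agents to the left of $z$ minus the number to the right, which is non-positive for $z < a_{n/2}$, exactly zero on the interval $[a_{n/2}, a_{n/2+1}]$, and non-negative for $z > a_{n/2+1}$. Hence $f$ is flat across the two middle order statistics, so both median agents attain the global minimum: $f(a_{n/2}) = f(a_{n/2+1}) = \mu$, where $\mu := \min_z f(z)$.

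The lemma then follows from matching bounds. For the lower bound, every feasible solution satisfies $\SC(\bx) = f(x) + f(y) \ge \mu + \mu = 2\mu$, since each chosen location incurs cost at least $\mu$. For the upper bound, placing the facilities at the two median agents achieves $f(a_{n/2}) + f(a_{n/2+1}) = 2\mu$, matching the bound and hence optimal.

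The one point that requires care, and the main obstacle, is feasibility: the two facilities must occupy distinct locations. When $a_{n/2} \neq a_{n/2+1}$ the median solution is automatically feasible and the argument above is complete. The delicate case is $a_{n/2} = a_{n/2+1}$, where $f$ has a single minimizing point and no two distinct reported locations can both attain $\mu$; here I would either appeal to the model's convention for selecting a size-$k$ sub-(multi)set of the reported locations (so that two distinct median agents sharing a location still constitute a valid choice), or handle the degenerate instance directly by arguing the second facility is forced to the next-best reported location. Pinning down this edge case so that it is consistent with the distinctness requirement is where the genuine work lies; the separability and the 1-median facts are routine.
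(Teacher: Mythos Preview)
Your proposal is correct and follows essentially the same route as the paper: both arguments use the separability $\SC(x,y)=f(x)+f(y)$ and the standard 1-median fact that $f$ is minimized throughout $[a_{n/2},a_{n/2+1}]$, so the two medians together attain the lower bound $2\mu$. Your extra care about the edge case $a_{n/2}=a_{n/2+1}$ is resolved in the paper's model by the fact that the candidate set is the \emph{multiset} of reported locations, so two distinct median agents at the same point still form a feasible solution.
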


\begin{proof}
Let $m_1$ and $m_2$ be two median agents. Suppose that there is an optimal solution $(o_1,o_2)$ with $o_1 \leq o_2$. Since any point $x \in [m_1, m_2]$ minimizes the total distance of all agents from any other point of the line, we have
\begin{align*}
\SC(m_1,m_2) = \sum_i d(i,m_1) + \sum_i d(i,m_2) \leq \sum_i d(i,o_1) + \sum_i d(i,o_2) = \SC(o_1,o_2),
\end{align*}
and thus $(m_1,m_2)$ is also an optimal solution. 
\end{proof}

Before we continue, we remark that the {\sc Two-Medians} mechanism, which is implied by Lemma~\ref{lem:sum:even}, is indeed strategyproof: 
To change the solution of the mechanism, an agent $i$ would have to report a location $x > m_1$ in case $i\leq m_1$ or a location $x<m_2$ in case $i \geq m_2$; such a misreport leads to an individual cost of at least $\min\{d(i,x),d(i,m_2)\} + d(i,m_2)$ in the first case and of at least $d(i,m_1)+\min\{d(i,x),d(i,m_1)\}$ in the second case, which is at least the true individual cost $d(i,m_1) +d(i,m_2)$ of $i$. Hence, agent $i$ has no incentive to deviate and the mechanism is strategyproof.

For the case where the number of agents $n \geq 3$ is an odd number, it will be useful to calculate the social cost of the solutions $(\ell,m)$ and $(m,r)$, where $\ell$ and $r$ are the agents directly to the left and right of the median agent $m$, respectively. 

\begin{lemma}\label{lem:social-cost-x}
For any  $x \in \{\ell,r\}$, the social cost of the solution $(x,m)$ is 
$$\SC(x,m) = 2 \cdot \sum_i d(i,m) + d(m,x).$$
\end{lemma}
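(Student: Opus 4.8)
The plan is to start from the definition of the social cost in the sum-variant and use linearity of the outer sum to write
\begin{align*}
\SC(x,m) = \sum_i \big( d(i,x) + d(i,m) \big) = \sum_i d(i,x) + \sum_i d(i,m).
\end{align*}
Comparing this with the target expression $2\sum_i d(i,m) + d(m,x)$, the whole lemma reduces to the single identity $\sum_i d(i,x) = \sum_i d(i,m) + d(m,x)$; that is, to showing that replacing the facility at $m$ by one at $x \in \{\ell,r\}$ changes the total distance of the agents by exactly $d(m,x)$.

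First I would treat $x=r$ and handle $x=\ell$ by a symmetric argument at the end. Since $r$ is the agent directly to the right of $m$, no agent lies strictly between $m$ and $r$ on the line, so every agent $i$ satisfies either $i \leq m$ or $i \geq r$. For an agent with $i \leq m \leq r$ we have $d(i,r)-d(i,m) = (r-i)-(m-i) = d(m,r)$, while for an agent with $m \leq r \leq i$ we have $d(i,r)-d(i,m) = (i-r)-(i-m) = -d(m,r)$ (agents sitting exactly at $m$ or $r$ fall into one of these two cases and give the correct signed contribution). Summing over all agents, if $p$ denotes the number of agents with $i \leq m$ and $q$ the number with $i \geq r$, then $\sum_i d(i,r) - \sum_i d(i,m) = (p-q)\cdot d(m,r)$.

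The crux of the argument --- and the one place the hypotheses are really used --- is the claim that $p-q = 1$, which is exactly where the oddness of $n$ and the fact that $m$ is the median enter. Among the $n$ agents there are $(n-1)/2$ strictly to the left of $m$, the median $m$ itself, and $(n-1)/2$ to its right, so that $(n+1)/2$ agents satisfy $i \leq m$ and $(n-1)/2$ satisfy $i \geq r$, giving $p-q=1$ and hence $\sum_i d(i,r) = \sum_i d(i,m) + d(m,r)$, as required. The mirror-image computation, splitting agents into those with $i \geq m$ and those with $i \leq \ell$, yields $\sum_i d(i,\ell) = \sum_i d(i,m) + d(m,\ell)$. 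The only case needing a separate word is a tie $d(m,x)=0$ (when $x$ and $m$ share a location), but there the identity holds trivially since $\sum_i d(i,x) = \sum_i d(i,m)$; I would mention this to keep the counting rigorous when agent locations coincide.
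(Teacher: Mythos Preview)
Your proof is correct and follows essentially the same approach as the paper's: both reduce to the identity $\sum_i d(i,x) = \sum_i d(i,m) + d(m,x)$ and establish it using the fact that, since $n$ is odd with $m$ the median, exactly one more agent lies weakly on $m$'s side of the segment $[m,x]$ than on the other. The paper phrases this via a matching $\mu$ between the agents in $\{i \leq \ell\}$ and those in $\{i \geq r\}$ (so that $d(i,x)+d(\mu(i),x)=d(i,m)+d(\mu(i),m)$ for $x\in\{\ell,m,r\}$, leaving only $d(m,x)$ unaccounted for), whereas you count the signed contributions $d(i,x)-d(i,m)=\pm d(m,x)$ directly and use $p-q=1$; the two arguments are equivalent.
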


\begin{proof}
By the definition of $m$ and since there is an odd number of agents, we have $|\{i \leq \ell\}| = |\{i \geq r\}|$.
Thus, we can match each agent $i \leq \ell$ to a unique agent $\mu(i) \geq r$. Since $i \leq \ell \leq m \leq r \leq \mu(i)$, we have that 
$$d(i,\ell) + d(\mu(i),\ell) = d(i,r) + d(\mu(i),r) = d(i,m) + d(\mu(i),m).$$
Hence, for any $x \in \{\ell,r\}$, 
\begin{align*}
\SC(x,m) 
&= \sum_i \bigg( d(i,m) + d(i,x) \bigg) \nonumber \\
&= \sum_i d(i,m) + \sum_{i \leq \ell} \bigg(d(i,x) + d(\mu(i),x) \bigg) + d(m,x) \nonumber \\
&= \sum_i d(i,m) + \sum_{i \leq \ell} \bigg(d(i,m) + d(\mu(i),m) \bigg) + d(m,x) \nonumber \\
&= 2 \cdot \sum_i d(i,m) + d(m,x).
\end{align*}
This completes the proof. 
\end{proof}

Using the above lemma, we can argue about the structure of the optimal solution. 

\begin{lemma}\label{thm:sum:odd}
For any odd $n \geq 3$, an optimal solution is to place the facilities at the median agent and the agent that is closest to it.
\end{lemma}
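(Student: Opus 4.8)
The plan is to reduce the whole statement to understanding the single-point total-distance function $f(y) = \sum_i d(i,y)$, whose minimum over the line is attained at the median $m$. I would write $S = f(m) = \sum_i d(i,m)$ and let $x^\ast \in \{\ell,r\}$ be the neighbor of $m$ with $d(m,x^\ast) = \min\{d(m,\ell),d(m,r)\} =: \delta$; the claim that the optimum places facilities at $m$ and its closest agent is exactly the claim that $(m,x^\ast)$ is optimal. By Lemma~\ref{lem:social-cost-x} this candidate has social cost precisely $\SC(m,x^\ast) = 2S + \delta$, so everything comes down to showing that no feasible solution $(o_1,o_2)$ beats it, i.e. $f(o_1) + f(o_2) \geq 2S + \delta$.

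The key intermediate step I would establish is that \emph{every} agent location $o \neq m$ satisfies $f(o) \geq S + \delta$. First, rearranging Lemma~\ref{lem:social-cost-x} via $\SC(\ell,m) = f(\ell) + S$ and $\SC(m,r) = f(r) + S$ pins down the exact boundary values $f(\ell) = S + d(m,\ell)$ and $f(r) = S + d(m,r)$. Second, $f$ is convex (a sum of absolute-value functions) and minimized at $m$, hence non-decreasing as we move away from $m$ in either direction. Since $\ell$ and $r$ are the agents immediately adjacent to $m$, no agent lies strictly between $\ell$ and $r$ other than $m$ itself, so any agent $o \neq m$ has $o \leq \ell$ or $o \geq r$; in the former case $f(o) \geq f(\ell) = S + d(m,\ell) \geq S + \delta$, and in the latter $f(o) \geq f(r) = S + d(m,r) \geq S + \delta$.

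With this claim, the conclusion is immediate and needs no further case analysis. Any feasible solution must place the two facilities at \emph{distinct} agent locations, so at most one of $o_1,o_2$ can equal $m$; thus at least one of them is an agent $\neq m$ and contributes at least $S + \delta$, while the other contributes at least the global minimum $S$. Adding these gives $\SC(o_1,o_2) = f(o_1) + f(o_2) \geq 2S + \delta = \SC(m,x^\ast)$, which proves optimality of $(m,x^\ast)$, with equality attained exactly when one facility sits at $m$ and the other at $x^\ast$.

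I expect the main obstacle to be the intermediate claim, and specifically making rigorous that the extra $\delta$ term genuinely survives: moving a facility off $m$ to \emph{any} other agent must cost at least $\delta$ in added total distance, not merely something nonnegative. The combination of convexity/monotonicity of $f$ with the exact values $f(\ell),f(r)$ supplied by Lemma~\ref{lem:social-cost-x} is what delivers this, and the distinctness constraint on facility locations is precisely what forbids both facilities from sitting at $m$ and thereby forces exactly one $+\delta$ penalty. I would also double-check that the odd-$n$ hypothesis is invoked exactly where it guarantees that $m$ is the (unique) median and that $\ell,r$ are well-defined neighbors, noting that for even $n$ the argument instead degenerates into Lemma~\ref{lem:sum:even}.
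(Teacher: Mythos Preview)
Your proposal is correct and follows essentially the same approach as the paper: both arguments reduce to the decomposition $\SC(o_1,o_2)=f(o_1)+f(o_2)$, use the monotonicity of $f$ away from $m$ to reduce attention to the neighbors $\ell,r$, and invoke Lemma~\ref{lem:social-cost-x} to compare. The only organizational difference is that the paper first narrows the candidates to $(\ell,m)$ and $(m,r)$ and then compares them via $\SC(\ell,m)-\SC(m,r)=d(\ell,m)-d(m,r)$, whereas you extract the explicit values $f(\ell)=S+d(m,\ell)$, $f(r)=S+d(m,r)$ from Lemma~\ref{lem:social-cost-x} and prove the uniform lower bound $f(o)\geq S+\delta$ for all $o\neq m$ directly; the content is the same.
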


\begin{proof}
Clearly, one of $\ell$ or $r$ is the closest agent to $m$, say $\ell$; hence, $d(\ell,m) \leq d(m,r)$. 
To simplify our notation, for any $x$ let $f(x) = \sum_i d(i,x)$ denote the total distance of all agents from $x$. It is well-known that $f$ is monotone such that $f(i) \geq f(\ell) \geq f(m)$ for every $i \leq \ell \leq m$, and $f(i) \geq f(r) \geq f(m)$ for every $i \geq r \geq m$. Consequently, the optimal solution is either $(\ell,m)$ or $(m,r)$. By Lemma~\ref{lem:social-cost-x} with $x=\ell$ and $x=r$, we get
$$\SC(\ell,m) - \SC(m,r) = d(\ell,m) - d(m,r).$$ 
Since $d(\ell,m) \leq d(m,r)$, we conclude that $\SC(\ell,m) \leq \SC(m,r)$ and the solution $(\ell,m)$ is indeed the optimal one. 
\end{proof}

It is not hard to observe that when $n$ is odd, computing the optimal solution is not strategyproof; the second-closest agent to the median might have incentive to misreport a location slightly closer to the median to move the second facility there. 
However, we do know that one of the solutions $(\ell,m)$ and $(m,r)$ must be optimal. Based on this, we consider the following {\sc Median-Right} mechanism: Place one facility at the position the median agent $m$ and the other at the position of the agent $r$ directly to the right of $m$.\footnote{Clearly, since we are dealing with the case of odd $n$, instead of this mechanism, one could also consider the {\sc Median-Left} mechanism which places the second facility to the agent $\ell$ that is directly to the left of $m$; both mechanisms are symmetric and achieve the same approximation ratio.} One can verify that this mechanism is strategyproof using an argument similar to the one we presented above for the {\sc Two-Medians} mechanism in the case of even $n$. So, we continue by bounding its approximation ratio. 

\begin{theorem}\label{thm:sum:deterministic:upper}
For any odd $n \geq 3$, the approximation ratio of the {\sc Median-Right} mechanism is at most $3/2$.
\end{theorem}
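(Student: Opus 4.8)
The plan is to turn the theorem into a short, explicit inequality by invoking the two preceding lemmas, which already pin down both the cost of the mechanism's output and the optimal cost. Writing $S = \sum_i d(i,m)$, $a = d(\ell,m)$, and $b = d(m,r)$, Lemma~\ref{lem:social-cost-x} gives that the {\sc Median-Right} solution has cost $\SC(m,r) = 2S + b$, while Lemma~\ref{thm:sum:odd} tells us that the optimal solution is one of $(\ell,m)$ and $(m,r)$, so its cost is $\min\{2S+a,\,2S+b\}$. Hence the approximation ratio of the mechanism on any fixed instance equals $\frac{2S+b}{\min\{2S+a,\,2S+b\}}$, and the whole theorem reduces to bounding this expression by $3/2$.

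I would then split into two cases according to which neighbour of $m$ is closer. If $b \le a$, then $r$ is (weakly) closest to $m$, so by Lemma~\ref{thm:sum:odd} the solution $(m,r)$ is itself optimal and the ratio is exactly $1$. The only nontrivial case is $b > a$, where the optimum is $(\ell,m)$ with cost $2S+a$ and the ratio is $\frac{2S+b}{2S+a}$.

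The key quantitative input is a lower bound on $S$ in terms of $a$ and $b$. Because $n$ is odd, there are $(n-1)/2 \ge 1$ agents strictly on each side of the median $m$; every agent to the left of $m$ lies at distance at least $a = d(\ell,m)$ from $m$, and every agent to the right at distance at least $b = d(m,r)$. Accounting just for $\ell$ and $r$ already yields $S \ge a + b$. Substituting into the target inequality $\frac{2S+b}{2S+a} \le \frac{3}{2}$, which rearranges to $2b - 3a \le 2S$, I get $2S \ge 2(a+b) = 2b + 2a \ge 2b - 3a$, so the bound holds; equality is only approached in the limit $a \to 0$ with $S = a+b$, i.e.\ the three-agent instance, which is exactly where one expects the matching lower bound.

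I do not anticipate a genuine obstacle: once Lemma~\ref{lem:social-cost-x} and Lemma~\ref{thm:sum:odd} are available, both social costs are explicit affine expressions in $S, a, b$, and the theorem is an elementary inequality. The one step deserving care is the bound $S \ge a+b$, since it relies on there being at least one agent on each side of the median---guaranteed precisely because $n \ge 3$ is odd---and on the fact that $\ell$ and $r$ are the nearest agents on their respective sides, so that all other same-side agents contribute at least as much to $S$.
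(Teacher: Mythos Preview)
Your proposal is correct and follows essentially the same route as the paper: invoke Lemma~\ref{lem:social-cost-x} and Lemma~\ref{thm:sum:odd} to write both costs as $2S+b$ and $2S+a$, then lower-bound $S$ via the agents flanking the median. The one noteworthy difference is that the paper uses all $(n-1)/2$ agents on the right of $m$ to get $2S \geq (n-1)\,b$ and hence the sharper ratio $n/(n-1)$, whereas you use only $\ell$ and $r$ to obtain $S \geq a+b$, which suffices for the stated $3/2$ bound but does not exhibit the improvement as $n$ grows.
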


\begin{proof}
The solution of the mechanism is $\bw = (m,r)$. If $r$ is the closest agent to $m$, then $\bw$ is optimal by Lemma~\ref{thm:sum:odd}. So, assume that this is not the case and the optimal solution is $\bo = (\ell,m)$. By Lemma~\ref{lem:social-cost-x} with $x=r$, we get
\begin{align*}
\SC(\bw) = 2 \cdot \sum_i d(i,m) + d(m,r).
\end{align*}
Similarly, for $x=\ell$, we get
\begin{align*}
\SC(\bo) &= 2 \cdot \sum_i d(i,m) + d(\ell,m) \\
&\geq 2 \cdot \sum_i d(i,m) \\
&\geq 2\cdot |\{i \geq r\}| \cdot d(m,r) = (n-1) \cdot d(m,r).  
\end{align*}
Using these two lower bounds on the optimal social cost, we can now upper-bound the social cost of $\bw$ as follows:
\begin{align*}
\SC(\bw) \leq \bigg( 1 + \frac{1}{n-1} \bigg) \cdot \SC(\bo) = \frac{n}{n-1} \cdot \SC(\bo).
\end{align*}
Therefore, the approximation ratio is at most $n/(n-1) \leq 3/2$ for any $n \geq 3$.
\end{proof}

The approximation ratio of $3/2$ is in fact the best possible that can be achieved by any deterministic strategyproof mechanism.

\begin{theorem}\label{thm:sum:deterministic:lower}
The approximation ratio of any deterministic strategyproof mechanism is at least $3/2$.
\end{theorem}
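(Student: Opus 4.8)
The plan is to establish the bound through a limiting family of three-agent instances rather than a single worst case: a quick check shows that on any instance with three distinct locations every feasible solution already has ratio strictly below $3/2$, the value $3/2$ being approached only as the gap between two neighbouring locations vanishes or one location is pushed to infinity. Accordingly, I would fix an arbitrary deterministic strategyproof mechanism $M$ and, for each large $t$, force $M$ (via strategyproofness) onto a suboptimal solution whose ratio tends to $3/2$ as $t\to\infty$; since the approximation ratio is a supremum over instances, this yields the claimed lower bound.

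I would start from the base instance $A=(0,1,2)$ and inspect $M(A)$, which must be one of $\{0,1\}$, $\{1,2\}$, $\{0,2\}$; the direction in which I stretch $A$ depends on whether $M(A)$ uses the rightmost location. Consider first the case $M(A)\neq\{0,1\}$, i.e. $M(A)$ places a facility at $2$, so that the rightmost agent's individual cost under $M(A)$ is at most $2$. Define $A_t=(0,1,2+t)$ by moving the rightmost agent to $2+t$; its unique optimum is $\{0,1\}$ with social cost $5+2t$, whereas $\{1,2+t\}$ and $\{0,2+t\}$ cost $5+3t$ and $6+3t$. The crucial step is to exclude $M(A_t)=\{0,1\}$ using strategyproofness of the rightmost agent: if this agent (true location $2+t$) reports $2$ instead, the profile becomes $A$ and the outcome is $M(A)$; because every facility of $M(A)$ sits at a location at most $2<2+t$, stretching the agent out by $t$ increases its distance to each facility by exactly $t$, so the deviation costs it its $M(A)$-cost plus $2t$, at most $2+2t$. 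Since $\{0,1\}$ would instead cost it $3+2t>2+2t$, strategyproofness forbids $M(A_t)=\{0,1\}$, forcing $M(A_t)\in\{\{1,2+t\},\{0,2+t\}\}$ and hence ratio at least $(5+3t)/(5+2t)$.

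The remaining case $M(A)=\{0,1\}$ is the mirror image: I would use $B_t=(-t,1,2)$, whose unique optimum is $\{1,2\}$, and run the symmetric deviation on the leftmost agent (which, by reporting $0$, induces $A$ and outcome $\{0,1\}$) to rule out $M(B_t)=\{1,2\}$ and again obtain ratio at least $(5+3t)/(5+2t)$. In both cases $(5+3t)/(5+2t)\to 3/2$ as $t\to\infty$, so for every $\delta>0$ some instance has ratio exceeding $3/2-\delta$, giving approximation ratio at least $3/2$.

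The main obstacle is precisely the exclusion step: one must choose the stretched instance and the mimicking report so that imitating the base outcome is strictly cheaper for the far agent than the optimal solution $\{0,1\}$. What makes this work cleanly is the observation that moving the extreme agent out by $t$ raises its distance to each of the (at most two) facilities of $M(A)$ by the same amount $t$; this is what converts the qualitative fact ``$M(A)$ costs the rightmost agent at most $2$'' into the usable inequality $2+2t<3+2t$ that defeats the optimal solution. Everything else is routine social-cost arithmetic.
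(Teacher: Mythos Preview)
Your argument is correct. Both you and the paper argue via a three-agent base instance $(0,1,2)$, perturb one extreme agent, and use a single strategyproofness constraint to rule out the optimal solution in the perturbed instance; in both cases the bound $3/2$ is obtained only in the limit.

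The difference is the direction of the perturbation. The paper moves the extreme agent \emph{inward}: starting from the observation that some facility lies at $0$ (say), it moves the agent at $2$ to $1+\varepsilon$ and argues that a facility must remain at $0$, since otherwise that agent (true location $2$) would gain by reporting $1+\varepsilon$. This yields social cost $\approx 3$ against the optimum $(1,1+\varepsilon)$ of cost $\approx 2$, so the ratio is $3/2-O(\varepsilon)$. You instead move the extreme agent \emph{outward} to $2+t$ and argue that the mechanism cannot output the optimum $\{0,1\}$, because the far agent could then profitably report $2$. Both deductions are clean one-step strategyproofness arguments; the paper's version is marginally tidier in that the perturbed instance directly attains ratio $3/2$ (up to $\varepsilon$) rather than a ratio $(5+3t)/(5+2t)$ that approaches $3/2$, but conceptually the two constructions are duals of each other and equally valid.
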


\begin{proof}
Consider any deterministic strategyproof mechanism and an instance with $n=3$ agents located at $0$, $1$ and $2$. 
Since there are three possible locations for two facilities, one facility must be placed at $0$ or $2$, say $0$. 
Then, the cost of the agent $i$ that is located at $2$ is at least $2$ (in particular, the cost of this agent is $3$ if the solution is $(0,1)$ and $2$ if the solution is $(0,2)$).

Now suppose that $i$ moves to $1+\varepsilon$ for some infinitesimal $\varepsilon > 0$. Due to strategyproofness, the mechanism must place one of the facilities at $0$ in the new instance as well. Otherwise, agent $i$ would have cost $2-\varepsilon$ and would prefer to misreport its position as $1+\varepsilon$ instead of $2$. So, the social cost of any of the two possible solutions (either $(0,1)$ or $(0,1+\varepsilon)$) that the mechanism can output is approximately $3$. In contrast, the optimal solution is $(1,1+\varepsilon)$ with social cost approximately $2$, leading to an approximation ratio of $3/2$.
\end{proof}

Since the optimal solution is either $(\ell,m)$ or $(m,r)$, it is reasonable to think that randomizing over these two solutions, rather than blindly choosing one of them, can lead to an improved approximation ratio. Indeed, we can show a significantly smaller tight bound of $10-4\sqrt{5} \approx 1.0557$ for randomized strategyproof mechanisms when $n \geq 3$ is an odd number; recall that, for even $n \geq 2$, we can always compute the optimal solution. 
For the upper bound, we consider the following {\sc Reverse-Proportional} randomized mechanism: 
With probability $p_\ell = \frac{d(m,r)}{d(\ell,r)}$ choose the solution $(\ell,m)$, and
with probability $p_r = \frac{d(\ell,m)}{d(\ell,r)}$ choose the solution $(m,r)$.

\begin{theorem} \label{thm:sum:randomized-sp}
The {\sc Reverse-Proportional} mechanism is strategyproof-in-expectation.
\end{theorem}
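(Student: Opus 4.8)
The plan is to work with the two basic distances $a = d(\ell,m)$ and $b = d(m,r)$, so that $d(\ell,r)=a+b$, $p_\ell = b/(a+b)$ and $p_r = a/(a+b)$. For a reported instance $J$ with median $m_J$ and neighbours $\ell_J, r_J$, the mechanism places one facility deterministically at $m_J$ and a second facility at a random location $S\in\{\ell_J,r_J\}$, with $S=\ell_J$ w.p. $d(m_J,r_J)/d(\ell_J,r_J)$ and $S=r_J$ w.p. $d(\ell_J,m_J)/d(\ell_J,r_J)$. The first thing I would record is the algebraic identity that makes the whole proof go: the reverse-proportional weighting is precisely the one for which the \emph{expected} location of the second facility coincides with the median. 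Writing $a_J = d(\ell_J,m_J)$, $b_J = d(m_J,r_J)$, $\ell_J = m_J - a_J$ and $r_J = m_J + b_J$, a one-line computation gives $\mathbb{E}[S] = \tfrac{b_J}{a_J+b_J}\ell_J + \tfrac{a_J}{a_J+b_J}r_J = m_J$.

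Combining this identity with the convexity of $x\mapsto|i-x|$ and Jensen's inequality yields the master bound I would use throughout: for any report of agent $i$ producing instance $J$,
\[
\mathbb{E}[\cost_i(M(J)\mid I)] = d(i,m_J) + \mathbb{E}[d(i,S)] \ \ge\ d(i,m_J) + d(i,\mathbb{E}[S]) = 2\,d(i,m_J),
\]
with equality whenever $i$ lies outside the open interval $(\ell_J,r_J)$, since there $|i-\cdot|$ is affine over the support of $S$. In particular, reporting truthfully gives every non-median agent expected cost exactly $2\,d(i,m)$, because the only agent lying strictly inside $(\ell,r)$ is $m$ itself.

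For a non-median agent the argument then closes quickly. If $i$ lies to the left of $m$ then it is one of the $t$ smallest of the $n=2t+1$ reported locations, so in any instance $J$ obtained by changing only $i$'s report at least $t+1$ locations remain $\ge m$; hence the median satisfies $m_J\ge m$ and $d(i,m_J)\ge d(i,m)$. The symmetric statement holds for agents to the right of $m$. Together with the master bound and the exact truthful value $2\,d(i,m)$, this gives $\mathbb{E}[\cost_i(M(J)\mid I)] \ge 2\,d(i,m_J)\ge 2\,d(i,m) = \cost_i(M(I)\mid I)$, so no such agent can gain by deviating.

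The main obstacle is the median agent $i=m$, whose truthful cost is $2ab/(a+b)>0=2\,d(m,m)$, so the master bound does not immediately suffice. I would split on where the report $x$ falls. If $x\le\ell$ the new median snaps to $\ell$ and the master bound already gives cost $\ge 2\,d(m,\ell)=2a$; symmetrically $x\ge r$ gives cost $\ge 2b$; and both $2a$ and $2b$ are at least $2ab/(a+b)$. The delicate case is an interior report $x\in(\ell,r)$: since no other agent lies strictly between $\ell$ and $r$, the neighbours of the new median remain $\ell$ and $r$ while the median itself moves to $x$, and here the master bound degenerates to the useless $\ge 2|x-m|$. For this case I would compute the expected cost exactly: with $y=x-m\in(-a,b)$ it simplifies to $|y| + \big(2ab + (b-a)y\big)/(a+b)$, whose excess over the truthful value $2ab/(a+b)$ equals $\tfrac{2b}{a+b}\,y\ge 0$ for $y\ge 0$ and $-\tfrac{2a}{a+b}\,y\ge 0$ for $y<0$. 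Hence truthful reporting also minimises the median's expected cost, and assembling the three agent types establishes strategyproofness-in-expectation.
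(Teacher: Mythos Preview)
Your proof is correct and takes a genuinely different route from the paper's. The paper proceeds by exhaustive case analysis: for the median it explicitly computes the deviated expected cost first for $x\in[\ell,m)$ and then for $x<\ell$ (showing the latter is non-decreasing in $d(x,\ell)$); for an agent $i\ge r$ it computes the truthful cost $2\,d(m,i)$, then separately analyses deviations that keep $\ell,m$ fixed but replace $r$ by some $x\ge m$ (splitting further into $x\le r$ and $x>r$, the second being possible only when $i=r$), and finally deviations to the left of $m$, where both randomized outcomes are individually shown to cost at least $2\,d(m,i)$.

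Your argument replaces most of this with a single structural observation: the reverse-proportional weights are exactly those for which $\mathbb{E}[S]=m_J$, so Jensen's inequality yields the uniform lower bound $\mathbb{E}[\cost_i]\ge 2\,d(i,m_J)$, with equality whenever $i\notin(\ell_J,r_J)$. This one inequality dispatches all non-median agents at once (via the order-statistic fact that a unilateral deviation can only push the median weakly away from the deviator's true side) and also the median's ``outside'' deviations $x\le\ell$ or $x\ge r$, leaving only the interior deviation $x\in(\ell,r)$ for a short direct calculation that mirrors the paper's. The payoff is a shorter and more explanatory proof: it makes transparent \emph{why} these particular probabilities are the strategyproof ones, whereas the paper's computations, while complete, verify each case without isolating this reason.
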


\begin{proof}
Due to symmetry, it suffices to show that no agent $i \geq m$ has any profitable deviation. We first consider agent $m$, and then any agent $i \geq r$. 

\medskip

\noindent 
{\bf Agent $m$.} The expected individual cost of $m$ is 
\begin{align*}
\cost_m(\ell,m,r) = \frac{d(m,r)}{d(\ell,r)} \cdot d(\ell,m) + \frac{d(\ell,m)}{d(\ell,r)} \cdot d(m,r) 
= 2\cdot \frac{d(\ell,m)\cdot d(m,r)}{d(\ell,r)}.
\end{align*}
Suppose now that $m$ deviates to another location $x$ such that $\ell \leq x < m \leq r$. In this new instance, the solution $(\ell,x)$ is chosen with probability $d(x,r)/d(\ell,r)$ and the solution $(x,r)$ is chosen with probability $d(\ell,x)/d(\ell,r)$. The expected cost of $m$ becomes
\begin{align*}
\cost_m(\ell,x,r) 
&= \frac{d(x,r)}{d(\ell,r)} \cdot \bigg( d(\ell,m) + d(x,m) \bigg) + \frac{d(\ell,x)}{d(\ell,r)} \cdot \bigg( d(x,m) + d(m,r) \bigg) \\
&= \frac{d(x,r)}{d(\ell,r)} \cdot d(\ell,m) + \frac{d(\ell,x)}{d(\ell,r)} \cdot d(m,r) + d(x,m) \\
&= \frac{d(x,m) + d(m,r)}{d(\ell,r)} \cdot d(\ell,m) + \frac{d(\ell,m)-d(x,m)}{d(\ell,r)} \cdot d(m,r) + d(x,m) \\
&= \cost_m(\ell,m,r) + \frac{d(x,m)}{d(\ell,r)} \cdot d(\ell,m) - \frac{d(x,m)}{d(\ell,r)} \cdot d(m,r) + d(x,m) \\
&= \cost_m(\ell,m,r) + \frac{d(x,m)}{d(\ell,r)} \cdot \bigg( d(\ell,r) + d(\ell,m) - d(m,r) \bigg) \\
&= \cost_m(\ell,m,r) + \frac{d(x,m)}{d(\ell,r)} \cdot 2 d(\ell,m), 
\end{align*}
and thus $m$ has no incentive to deviate to such a location $x$. 

Next, suppose that $m$ deviates to a location $x$ such that $x < \ell \leq m \leq r$. In this new instance, the solution $(x,\ell)$ is chosen with probability $d(\ell,r)/d(x,r)$ and the solution $(\ell,r)$ is chosen with probability $d(x,\ell)/d(x,r)$. The expected cost of $m$ becomes
\begin{align*}
\cost_m(x,\ell,r) 
&= \frac{d(\ell,r)}{d(x,r)} \cdot \bigg( d(x,m) + d(\ell,m) \bigg) + \frac{d(x,\ell)}{d(x,r)} \cdot \bigg( d(\ell,m) + d(m,r) \bigg) \\
&= d(\ell,m) + \frac{d(\ell,r)}{d(x,r)} \cdot d(x,m) +  \frac{d(x,\ell)}{d(x,r)} \cdot d(m,r) \\
&= d(\ell,m) + \frac{d(\ell,r)}{d(x,\ell) + d(\ell,r)} \cdot \bigg( d(x,\ell) + d(\ell,m) \bigg) +  \frac{d(x,\ell)}{d(x,r)} \cdot d(m,r). 
\end{align*}
As a function of $d(\ell,x) > 0$, $\cost_m(x,\ell,r)$ is a non-decreasing function and thus
\begin{align*}
\cost_m(x,\ell,r) > 2\cdot d(\ell,m) \geq 2\cdot \frac{d(\ell,m)\cdot d(m,r)}{d(\ell,r)} = \cost_m(\ell,m,r),
\end{align*}
where the second inequality follows since $d(m,r)/d(\ell,r) \leq 1$. So, $m$ again has no incentive to deviate to such a location $x$.

\medskip
\noindent 
{\bf Agent $i \geq r$.} 
The expected individual cost of $i$ is
\begin{align*}
\cost_i(\ell,m,r) 
&= \frac{d(m,r)}{d(\ell,r)} \cdot \bigg( d(\ell,i) + d(m,i) \bigg) + \frac{d(\ell,m)}{d(\ell,r)} \cdot \bigg( d(m,i) + d(r,i) \bigg)  \\
&= d(m,i) + \frac{d(m,r)}{d(\ell,r)} \cdot \bigg( d(\ell,r) + d(r,i) \bigg) + \frac{d(\ell,m)}{d(\ell,r)} \cdot d(r,i) \\
&= d(m,i) + d(m,r) + d(r,i) \\
&= 2 \cdot d(m,i).
\end{align*}
First consider a deviation of $i$ to a location that retains the order of $\ell$ and $m$ but changes the location of the agent directly to the right of $m$ to $x \geq m$. That is, the three middle agents have locations $\ell$, $m$, and $x$.
In this new instance, the solution $(\ell,m)$ is chosen with probability $d(m,x)/d(\ell,x)$ and the solution $(m,x)$ is chosen with probability $d(\ell,m)/d(\ell,x)$. Hence, the expected individual cost of $i$ is 
\begin{align*}
\cost_i(\ell,m,x) 
&= \frac{d(m,x)}{d(\ell,x)} \cdot \bigg( d(\ell,i) + d(m,i) \bigg) + \frac{d(\ell,m)}{d(\ell,x)} \bigg( d(m,i) + d(x,i) \bigg)  \\
&= d(m,i) + \frac{d(m,x)}{d(\ell,x)} \cdot d(\ell,i) + \frac{d(\ell,m)}{d(\ell,x)} \cdot d(x,i).
\end{align*}
We now consider the following two cases depending on the relative positions of $r$ and $x$. 
\begin{itemize}
\item If $m \leq x \leq r \leq i$, then since $d(\ell,i) = d(\ell,x) + d(x,i)$ and $d(m,i) = d(m,x) + d(x,i)$, we have 
\begin{align*}
\cost_r(\ell,m,x) 
&= d(m,i) + \frac{d(m,x)}{d(\ell,x)} \cdot \bigg( d(\ell,x) + d(x,i) \bigg)  + \frac{d(\ell,m)}{d(\ell,x)} \cdot d(x,i) \\
&= d(m,i) + d(m,x) + d(x,i)  \\
&= 2 \cdot d(m,i), 
\end{align*}
and thus $i$ does not decrease its cost.

\item If $m \leq r < x$, then it must be the case that $i=r$ since no agent $i > r$ can deviate to location $x > r$ and be closer to $m$ than $r$. Since $d(x,r) > 0$, $d(m,x) = d(m,r) + d(x,r)$, $d(\ell,r) \geq d(m,r)$ and $d(\ell,r) + d(x,r) = d(\ell,x)$, we have
\begin{align*}
\cost_r(\ell,m,x) 
&> d(m,r) + \frac{d(m,r) + d(x,r)}{d(\ell,x)} \cdot d(\ell,r) \\
&= d(m,r) + \frac{d(m,r)}{d(\ell,x)} \cdot d(\ell,r)  + \frac{d(\ell,r)}{d(\ell,x)} \cdot d(x,r) \\
&\geq d(m,r) + \frac{d(m,r)}{d(\ell,x)} \cdot d(\ell,r)  + \frac{d(m,r)}{d(\ell,x)} \cdot d(x,r) \\
&= d(m,r) + \frac{d(m,r)}{d(\ell,x)} \cdot \bigg( d(\ell,r) + d(x,r) \bigg) \\
&= 2 \cdot d (m,r). 
\end{align*}
Hence, again $r$ does not decrease its cost. 
\end{itemize}

Finally, consider the case where agent $i \geq r$ deviates to a location at the left of $m$ such that the three middle agents have locations 
$x \leq \ell$, $y \leq m$ and $z \leq m$ instead of $\ell$, $m$ and $r$. In this new instance, the solutions $(x,y)$ and $(y,z)$ are chosen with positive probability. Since $x,y,z \leq m$, the individual cost of $i$ for the solution $(x,y)$ is $d(x,i) + d(y,i) \geq 2\cdot d(m,i)$ and, similarly, the individual cost of $i$ for the solution $(y,z)$ is $d(y,i) + d(z,i) \geq 2\cdot d(m,i)$. Consequently, the expected cost of $i$ is at least $2 \cdot d(m,i) = d_i(\ell,m,r)$ for any probability distribution over the solutions $(x,y)$ and $(y,z)$, which means that $i$ has no incentive to deviate. 
\end{proof}

\begin{theorem} \label{thm:sum:randomized-upper}
For any odd $n \geq 3$, the approximation ratio of the {\sc Reverse-Proportional} mechanism is at most $10-4\sqrt{5} \approx 1.0557$.
\end{theorem}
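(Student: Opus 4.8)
The plan is to compare the expected social cost of the mechanism against the optimum using the closed forms from Lemma~\ref{lem:social-cost-x}. Write $a = d(\ell,m)$ and $b = d(m,r)$, and assume without loss of generality that $a \le b$, so that $\ell$ is the agent closest to $m$ and, by Lemma~\ref{thm:sum:odd}, the optimal solution is $(\ell,m)$. Setting $S = 2\sum_i d(i,m)$, Lemma~\ref{lem:social-cost-x} gives $\SC(\ell,m) = S + a$ and $\SC(m,r) = S + b$. Since $p_\ell = b/(a+b)$ and $p_r = a/(a+b)$, a direct computation yields the expected social cost $\mathbb{E}[\SC] = S + \frac{2ab}{a+b}$, so the approximation ratio on this instance equals $\frac{S + 2ab/(a+b)}{S+a}$.

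First I would observe that, because $\frac{2ab}{a+b} \ge a$ whenever $a \le b$, this ratio is non-increasing in $S$; hence for fixed $a,b$ it is maximized by taking $S$ as small as possible. A lower bound on $S$ follows exactly as in the proof of Theorem~\ref{thm:sum:deterministic:upper}: each of the $(n-1)/2$ agents weakly to the left of $\ell$ contributes at least $a$ to $\sum_i d(i,m)$, and each of the $(n-1)/2$ agents weakly to the right of $r$ contributes at least $b$, so $\sum_i d(i,m) \ge \frac{n-1}{2}(a+b) \ge a+b$, i.e.\ $S \ge 2(a+b)$, with equality attained at $n=3$. Substituting the worst-case value $S = 2(a+b)$ reduces the problem to bounding the scale-invariant quantity $\frac{2(a+b) + 2ab/(a+b)}{2(a+b)+a}$.

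Since this expression is homogeneous of degree zero in $(a,b)$, I would normalize $b = 1$ and set $t = a \in (0,1]$, obtaining the single-variable function $R(t) = \frac{2t^2 + 6t + 2}{3t^2 + 5t + 2}$. The remaining work is a routine calculus optimization: differentiating and clearing denominators shows that the critical points solve $4t^2 + 2t - 1 = 0$, whose unique root in $(0,1]$ is $t^* = \frac{\sqrt{5} - 1}{4}$; since $R(0^+) = R(1) = 1$, this interior critical point is the maximizer. Substituting $t^*$ back—using the relation $4t^2 + 2t = 1$ to simplify $R$ to $\frac{20t+10}{14t+11}$ first—and rationalizing gives $R(t^*) = 10 - 4\sqrt{5}$, which upper-bounds the ratio on every instance.

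The conceptual steps (rewriting the ratio, arguing monotonicity in $S$, and reducing to the $n=3$ worst case) are short; the main obstacle is purely computational, namely carrying the optimization of $R(t)$ through cleanly so that the messy-looking critical value collapses to the exact constant $10 - 4\sqrt{5}$. I expect the cleanest route to be substituting $4t^2 + 2t = 1$ into both the numerator and denominator of $R$ before rationalizing, which avoids evaluating an awkward quartic directly.
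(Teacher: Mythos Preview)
Your proposal is correct and follows essentially the same approach as the paper: both compute $\mathbb{E}[\SC]$ via Lemma~\ref{lem:social-cost-x}, lower-bound the optimal cost using $\sum_i d(i,m)\ge d(\ell,r)$, and then maximize the resulting one-variable ratio. The only difference is parameterization: the paper works directly with $p_r=a/(a+b)$, obtaining the cleaner expression $1+p_r\cdot\frac{1-2p_r}{2+p_r}$ (maximized at $p_r=\sqrt{5}-2$), whereas you normalize $b=1$ and optimize $R(t)$ over $t=a/b$; since $p_r=t/(t+1)$ this is the same optimization in different coordinates, with the paper's version requiring a bit less algebra.
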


\begin{proof}
Without loss of generality, suppose that $d(\ell,m) \leq d(m,r)$ and thus the optimal solution is $\bo = (\ell,m)$. 
By the definition of the mechanism, the solutions $d(\ell,m)$ and $d(m,r)$ are chosen with probability $p_\ell = d(m,r)/d(\ell,r)$ and $p_r = d(\ell,m)/d(\ell,r)$, respectively; observe that $p_\ell \geq p_r$. By Lemma~\ref{lem:social-cost-x} with $x=\ell$ and using the fact that that $d(\ell,m) + d(m,r) = d(\ell,r)$, we can lower-bound the optimal social cost as follows:
\begin{align*}
\SC(\bo) &= 2 \cdot \sum_i d(i,m) + d(\ell,m) \geq 2\cdot d(\ell,r) + d(\ell,m).
\end{align*}
Again using Lemma~\ref{lem:social-cost-x} with $x = \ell$ and $x=r$, as well as the fact that $p_\ell = 1-p_r$, we can write the expected social cost of the randomized solution $\bw$ chosen by the mechanism as
\begin{align*}
\mathbb{E}[\SC(\bw)]
&= p_\ell \cdot \bigg( 2 \cdot \sum_i d(i,m) + d(\ell,m) \bigg) + p_r \cdot \bigg( 2 \cdot \sum_i d(i,m) + d(m,r) \bigg) \\
&= 2 \cdot \sum_i d(i,m) + (1-p_r) \cdot d(\ell,m) + p_r \cdot d(m,r) \\
&= 2 \cdot \sum_i d(i,m) + d(\ell,m) + p_r \cdot \bigg( d(m,r) - d(\ell,m) \bigg) \\
&= \SC(\bo) + p_r \cdot \bigg( d(m,r) - d(\ell,m) \bigg).
\end{align*}
Consequently, the approximation ratio is
\begin{align*}
\frac{\mathbb{E}[\SC(\bw)]}{\SC(\bo)} &\leq 1 + p_r \cdot \frac{d(m,r) - d(\ell,m)}{2\cdot d(\ell,r) + d(\ell,m)} \\
&= 1 + p_r \cdot \frac{\frac{d(m,r)}{d(\ell,r)} - \frac{d(\ell,m)}{d(\ell,r)}}{2 + \frac{d(\ell,m)}{d(\ell,r)}} \\
&= 1 + p_r \cdot \frac{p_\ell - p_r}{2 + p_r}
\end{align*}
Using the fact that $p_\ell = 1 - p_r$, we finally have that
\begin{align*}
\frac{\mathbb{E}[\SC(\bw)]}{\SC(\bo)} &\leq 1 + p_r \cdot \frac{1-2\cdot p_r}{2 + p_r}.
\end{align*}
The last expression attains its maximum value of $10-4\sqrt{5} \approx 1.0557$ for $p_r = \sqrt{5}-2$.
\end{proof}

Next, we will argue that the {\sc Reverse-Proportional} mechanism is the best possible by showing a matching lower bound on the approximation ratio of any randomized strategyproof-in-expectation mechanism. To do this, we will use instances with three agents for which we first show the following technical lemma that reduces the class of mechanisms to consider. 

\begin{lemma} \label{lem:sum:randomized-structure}
Consider any instance with three agents located at $x < y < z$. For any randomized mechanism $M$ that assigns positive probability to the solution $(x,z)$, there exists a randomized mechanism $M_0$ that assigns $0$ probability to that solution and achieves at most as much expected social cost as $M$.
\end{lemma}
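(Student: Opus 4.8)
The plan is to exploit the fact that, on a three-agent instance, there are only three feasible solutions, and the ``extreme'' one $(x,z)$ is the most expensive of the three under the social-cost objective; hence any probability mass placed on it can be shifted to the other two solutions without ever increasing the expected social cost. Concretely, with $k=2$ facilities placed at two distinct agent locations and agents at $x<y<z$, the only feasible solutions are $(x,y)$, $(y,z)$, and $(x,z)$, so on this instance a randomized mechanism is just a distribution $(p_{xy},p_{yz},p_{xz})$ over these three, and the goal is to show that $(x,z)$ is redundant.

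First I would compute the three social costs explicitly. Writing $a=d(x,y)$ and $b=d(y,z)$ (so that $d(x,z)=a+b$), a direct summation of the sum-variant individual costs over the three agents gives
\begin{align*}
\SC(x,y) = 3a+2b, \qquad \SC(y,z) = 2a+3b, \qquad \SC(x,z) = 3a+3b.
\end{align*}
The key observation, which is the crux of the argument, is that $\SC(x,z)$ is the largest of the three: indeed $\SC(x,z)-\SC(x,y)=b\ge 0$ and $\SC(x,z)-\SC(y,z)=a\ge 0$, with both gaps strictly positive since $x<y<z$. Intuitively, placing both facilities at the two endpoints forces the middle agent to pay its full span $a+b$, whereas each of the other two solutions lets one facility sit closer to the bulk of the agents.

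Given this, I would define $M_0$ to agree with $M$ on every instance except this one, and on this instance to transfer the entire mass $p_{xz}$ onto one of the remaining solutions---say the cheaper of $(x,y)$ and $(y,z)$---while keeping the other two probabilities fixed. Since $M$ and $M_0$ differ only in that $p_{xz}$ worth of probability is moved from $(x,z)$ to a weakly cheaper solution, the change in expected social cost is $\mathbb{E}[\SC(M)]-\mathbb{E}[\SC(M_0)] = p_{xz}\cdot\big(\SC(x,z)-\min\{\SC(x,y),\SC(y,z)\}\big)\ge 0$, which yields exactly the claimed inequality while guaranteeing that $M_0$ assigns zero probability to $(x,z)$.

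Because $(x,z)$ is strictly dominated whenever $x<y<z$, there is no real computational obstacle; the only point requiring care is the social-cost computation itself and confirming the signs of the two gaps $a$ and $b$. The one subtlety worth flagging is that the statement asks solely for the social-cost comparison, not for $M_0$ to remain strategyproof-in-expectation, so \emph{any} redistribution of $p_{xz}$ suffices for the stated claim; if the subsequent lower-bound argument were to require $M_0$ to preserve strategyproofness, one would instead have to choose the redistribution consistently across all instances, but that goes beyond what this lemma asserts.
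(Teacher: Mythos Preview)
Your proposal is correct and follows essentially the same approach as the paper: compute the three social costs explicitly, observe that $\SC(x,z)$ weakly dominates the other two, and redistribute the mass $p_{xz}$ to the remaining solutions. The paper splits $p_{xz}$ via an arbitrary convex combination $\lambda$ between $(x,y)$ and $(y,z)$, whereas you put it all on the cheaper one; both choices establish the lemma.
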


\begin{proof}
Let $p_{xy}$, $p_{yz}$ and $p_{xz}$ be the probabilities assigned to the three possible solutions $(x,y)$, $(y,z)$ and $(x,z)$, with $p_{xz} > 0$. The social cost of each solution is 
\begin{align*}
\SC(x,y) &= d(x,y) + d(x,y) + d(x,z) + d(y,z) = 3 \cdot d(x,y) + 2\cdot d(y,z), \\
\SC(y,z) &= d(x,y) + d(x,z) + d(y,z) + d(y,z) = 2 \cdot d(x,y) + 3\cdot d(y,z), \\
\SC(x,z) &= d(x,z) + d(x,y) + d(y,z) + d(x,z) = 3 \cdot d(x,y) + 3\cdot d(y,z). 
\end{align*}
So, the expected social cost of the randomized solution $\mathbf{p} = (p_{xy}, p_{yz}, p_{xz})$ computed is
\begin{align*}
\mathbb{E}[\SC(\mathbf{p})] &= \bigg(3p_{xy} + 2p_{yz} + 3p_{xz} \bigg) \cdot d(x,y) + \bigg( 2p_{xy} + 3p_{yz} + 3p_{xz} \bigg) \cdot d(y,z) \\
&= 2\cdot \bigg( d(x,y) + d(y,z) \bigg) + 
\bigg(p_{xy} + p_{xz} \bigg) \cdot d(x,y) + \bigg( p_{yz} + p_{xz} \bigg) \cdot d(y,z).
\end{align*}
Clearly, for any $\lambda \in (0,1)$, 
\begin{align*}
\mathbb{E}[\SC(\mathbf{p})] 
&\geq 2\cdot \bigg( d(x,y) + d(y,z) \bigg) + 
\bigg(p_{xy} + \lambda \cdot p_{xz} \bigg) \cdot d(x,y) + \bigg( p_{yz} + (1-\lambda) p_{xz} \bigg) \cdot d(y,z).
\end{align*}
The last expression is exactly equal to the expected social cost of the randomized solution $\mathbf{q}$ that assigns probabilities $q_{xy} = p_{xy} + \lambda \cdot p_{xz}$, $q_{yz} = p_{yz} + (1-\lambda) p_{xz}$, and $q_{xz} = 0$, thus showing the claim that $\mathbb{E}[\SC(\mathbf{p})] \geq \mathbb{E}[\SC(\mathbf{q})]$.
\end{proof}

Using the above lemma, we can now show the desired lower bound.

\begin{theorem}\label{thm:sum:randomized-lower}
For the sum-variant, the approximation ratio of any randomized strategyproof-in-expectation mechanism is at least $10-4\sqrt{5} \approx 1.0557$.
\end{theorem}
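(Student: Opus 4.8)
The plan is to combine the social-cost structure from \cref{lem:sum:randomized-structure} with a single, carefully chosen strategyproofness constraint, and then optimize over a two-instance family. First I would invoke \cref{lem:sum:randomized-structure} to restrict attention to three-agent instances and to mechanisms that randomize only between the two ``adjacent'' solutions $(x,y)$ and $(y,z)$: since $\SC(x,z)$ dominates both other social costs, putting weight on $(x,z)$ can only increase the expected cost, so it is without loss for a lower bound. By translation/scale invariance I would then fix the instance family $I_b$ with agents at $-1,0,b$ for $b>0$, and write $q(b)$ for the probability the mechanism assigns to the right solution $(0,b)$ (hence $1-q(b)$ to the left solution $(-1,0)$).

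The heart of the argument is one deviation. I would pick $0<b_1<1<b_2$ and consider the instance $I_{b_2}$ in which the rightmost agent, truly at $b_2$, misreports $b_1$, thereby turning $I_{b_2}$ into $I_{b_1}$. Computing this agent's expected true cost under both solutions of each instance, the term where a facility lands on the agent itself cancels cleanly, and strategyproofness-in-expectation collapses to the single inequality $q(b_2)\,(1+b_2)\ \ge\ q(b_1)\,(1+b_1)$. In other words, the quantity $q(b)(1+b)$ is non-decreasing in $b$: as the outer agent moves away, the mechanism is forced to commit proportionally more weight to the solution that serves it.

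Next I would pair this invariant with the exact approximation ratios on the two instances. For $b>1$ the optimum is the left solution and the ratio equals $1+\frac{q(b)(b-1)}{3+2b}$; for $b<1$ the optimum is the right solution and the ratio equals $1+\frac{(1-q(b))(1-b)}{2+3b}$. Demanding ratio at most $\rho$ at both $I_{b_2}$ and $I_{b_1}$ upper-bounds $q(b_2)(1+b_2)$ and lower-bounds $q(b_1)(1+b_1)$; chaining these through the monotonicity inequality eliminates the probabilities entirely and leaves
\[
\rho-1\ \ge\ \frac{1+b_1}{\dfrac{(1+b_1)(2+3b_1)}{1-b_1}+\dfrac{(1+b_2)(3+2b_2)}{b_2-1}}.
\]
Finally I would optimize the free instances: the second denominator term is minimized at $b_2=1+\sqrt5$ with value $9+4\sqrt5$, and then maximizing over $b_1$ gives $b_1=\frac{\sqrt5-1}{4}$, at which point the right-hand side equals exactly $9-4\sqrt5$, i.e.\ $\rho\ge 10-4\sqrt5$.

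The main obstacle is conceptual rather than computational: finding the one deviation that yields the clean monotone invariant $q(b)(1+b)$, and recognizing the tension it creates. Just below $b=1$ the mechanism wants this invariant large (so as to place the left facility rarely), just above $b=1$ it wants it small, yet monotonicity forbids the drop; this is exactly what forces a bounded-away-from-$1$ ratio. Pinning the constant to $10-4\sqrt5$ is then a one-variable optimization, which I expect to verify by checking that the resulting quadratic in $b_1$ has vanishing discriminant precisely at $\rho=10-4\sqrt5$ (equivalently, that the stated $b_1,b_2$ are the extremizers). I would also double-check that the worst-case instance produced here matches the worst case of the {\sc Reverse-Proportional} mechanism, where the ratio of the two gaps is likewise $\frac{\sqrt5-1}{4}$, which is a reassuring consistency test that the bound is tight.
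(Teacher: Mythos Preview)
Your argument is correct and reaches the same tight constant, but it takes a genuinely different route from the paper's proof. The paper works with only two concrete instances: the \emph{symmetric} instance $(0,1,2)$ and one asymmetric instance $(0,1,1/q)$ with $q=3-\sqrt5$. On the symmetric instance it uses a parity/symmetry argument (one of the two extreme agents must have expected cost at least $2$) rather than a ratio constraint; a single strategyproofness inequality then pins down the probabilities on the second instance, and the ratio there is computed directly to be $10-4\sqrt5$. By contrast, you parameterize an entire family $I_b$, extract the clean monotone invariant $q(b)(1+b)$ from one deviation, impose \emph{two} ratio constraints (at $b_1<1$ and $b_2>1$), and then optimize. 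Interestingly, after translation the paper's asymmetric instance is exactly your $I_{b_1}$ with $b_1=(\sqrt5-1)/4$, and the paper's symmetry step is playing the role of your constraint ``$u_2\le 1$'' that you instead obtain from the ratio bound at $b_2=1+\sqrt5$ (indeed $(\rho-1)(9+4\sqrt5)=1$ precisely at $\rho=10-4\sqrt5$). Your approach is more systematic and exposes the underlying monotonicity structure; the paper's is shorter because symmetry replaces the $b_2$-optimization. One small remark: your appeal to \cref{lem:sum:randomized-structure} to drop the outer solution is a bit informal, but it is harmless here---if you keep the outer-solution probabilities $p_1,p_2$, the SP inequality becomes $q_2(1+b_2)+p_2 b_2\ge q_1(1+b_1)+p_1 b_1$, and a two-line check shows that the same bounds $q_2(1+b_2)+p_2b_2\le(\rho-1)\tfrac{(1+b_2)(3+2b_2)}{b_2-1}$ and $q_1(1+b_1)+p_1b_1\ge(1+b_1)\bigl(1-\tfrac{(\rho-1)(2+3b_1)}{1-b_1}\bigr)$ still follow from the ratio constraints, so your chaining goes through unchanged.
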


\begin{proof}
Consider any randomized strategyproof mechanism and an instance $I$ with three agents located at $0$, $1$ and $2$. Let $p_0(I)$ and $p_1(I)$ be the probabilities assigned to solutions $(0,1)$ and $(1,2)$, respectively. 
By Lemma~\ref{lem:sum:randomized-structure}, we can assume that $p_0(I) + p_1(I) = 1$, and thus suppose that $p_0(I) \geq 1/2$ without loss of generality. The expected individual cost of the agent $i$ that is located at $2$ is then 
$$3\cdot p_0(I) + 1\cdot p_1(I) \cdot 1 = 3\cdot p_0(I) + 1-p_0(I) = 2\cdot p_0(I) + 1 \geq 2.$$

Now consider an instance $J$ with three agents located at $0$, $1$ and $x=1/q \in (1,2)$, where $q=3-\sqrt{5} \approx 0.764$; hence, the only different between $I$ and $J$ is that agent $i$ is now located at $x$ rather than $2$. 
Let $p_0(J)$ and $p_x(J)$ be the probabilities assigned to solutions $(0,1)$ and $(1,x)$, respectively. 
Again, using Lemma~\ref{lem:sum:randomized-structure} we can assume that $p_0(J) + p_x(J) = 1$; any other case would achieve worse approximation ratio. 
Suppose that $p_x(J) > q$. 
Then, the expected cost of agent $i$ when misreporting its position as $1/q$ rather than $2$ would be 
\begin{align*}
3\cdot p_0(J) + \left(1 + 2-\frac{1}{q}\right)\cdot p_x(J)
&= 3 \cdot \bigg(1-p_x(J)\bigg) + \left(3 - \frac{1}{q} \right)\cdot p_x(J) \\
&= 3- \frac{1}{q} \cdot p_x(J) < 2
\end{align*}
and agent $i$ would manipulate the mechanism. Therefore, for the mechanism to be strategyproof, it has to be the case that $p_x(J) \leq q$, and thus $p_0(J) \geq 1-q$. 

In instance $J$, the optimal solution is $(1,x)$ with social cost $1+1/q + 2(1/q-1)=3/q-1$. Since the social cost of the solution $(0,1)$ is $2 + 1/q + 1/q-1 = 2/q+1$, the approximation ratio is 
\begin{align*}
\frac{p_0(J) \cdot \SC(0,1) + p_x(J)\cdot\SC(1,x)}{\SC(1,x)} 
&= p_x(J) + p_0(J) \cdot \frac{2/q+1}{3/q-1} \\
&= 1-p_0(J) + p_0(J) \cdot \frac{2+q}{3-q} \\
&= 1 +p_0(J) \cdot \frac{2q-1}{3-q} \\
&\geq 1 + (1-q) \cdot \frac{1-2(1-q)}{2 + (1-q)} = 10-4\sqrt{5}. 
\end{align*}
Hence, the approximation ratio is at least $10-4\sqrt{5} \approx 1.0557$.
\end{proof}

\section{Max-variant for Two Facilities} \label{sec:max}
We now turn our attention to the max-variant in which the individual cost of any agent is its distance from the farthest facility. One might be tempted to assume that the optimal solution has the same structure as in the sum-variant, which trivially holds for the case of $n=2$ agents. However, this is not true as the following example demonstrates: Consider an instance with $n=4$ agents with locations $-1/2$, $0$, $1$, and $2$. The optimal solution is $(-1/2,0)$ with a social cost of $5$; note that the two-medians solution $(0,1)$, which is optimal for the sum-variant according to Lemma~\ref{lem:sum:even}, has social cost $11/2$. 

In spite of this, we do not require the exact structure of the optimal solution to identify the best possible strategyproof mechanisms. For the class of deterministic mechanisms, we once again consider the {\sc Median-Right} mechanism; recall that this mechanism places one facility at the (leftmost) median agent $m$ and the other at agent $r$ that is directly to the right of $m$. This mechanism is strategyproof for the max-variant as well: The true individual cost of any agent $i \geq r$ is $d(i,m)$, and any misreport $x \geq m$ of does not change it, while any misreport $x < m$ can only lead to a larger cost; the case of $i < m$ is similar. We next show that this mechanism always achieves an approximation ratio of at most $3$, and it can achieve an improved approximation ratio of at most $2$ when the number of agents is even.\footnote{We remark that since {\sc Median-Right} is a strategyproof mechanism for both the sum- and the max-variant, the upper bound of $3$ on its approximation ratio for the max-variant follows directly from the upper bound of $3/2$ on its approximation ratio for the sum-variant; this is due to the sum and max individual cost functions being within a factor of $2$ of each other (see~\citep{lotfi2024max}). The bound can also be derived by setting $k=2$ to the more general bound of $k+1$ that we show for the case of multiple facilities in Section~\ref{sec:multiple:max}. Nevertheless, we include a full proof for completeness (and to also capture the case of even $n$ which will be useful later in the analysis of our randomized mechanism).}

\begin{theorem}\label{thm:max:deterministic:upper}
The approximation ratio of the {\sc Median-Right} mechanism is at most $2$ for any even $n \geq 4$ and at most $3$ for any odd $n \geq 3$. 
\end{theorem}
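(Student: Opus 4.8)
The plan is to give a direct, self-contained computation of the social cost of the mechanism's solution $\bw=(m,r)$ in the max-variant, paired with a lower bound on the optimal social cost that does not presuppose the structure of the optimal solution (which, as the example preceding the theorem shows, differs from the sum-variant). Throughout I write $f(x)=\sum_i d(i,x)$ for the total distance of the agents from a point $x$, and I set $n_L=|\{i\le m\}|$ and $n_R=|\{i\ge r\}|$. Since $r$ is the agent directly to the right of $m$, no agent lies strictly between $m$ and $r$, so these two groups partition the agents; for even $n$ we have $n_L=n_R=n/2$, whereas for odd $n$ we have $n_L=(n+1)/2$ and $n_R=(n-1)/2$. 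This is essentially the only place where the parity of $n$ enters the argument.

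First I would compute $\SC(\bw)$ exactly. Every agent $i\le m$ is farthest from $r$ and contributes $d(i,r)=d(i,m)+d(m,r)$, while every agent $i\ge r$ is farthest from $m$ and contributes $d(i,m)$. Summing over both groups and collecting the $d(m,r)$ terms yields the clean identity
\begin{align*}
\SC(\bw)=f(m)+n_L\cdot d(m,r).
\end{align*}

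Next I would establish two lower bounds on the optimal cost $\SC(\bo)$. For the first, I use that a solution placing facilities at $a\le b$ gives each agent max-cost at least its distance to $a$, so $\SC(\bo)\ge f(a)\ge \min_x f(x)=f(m)$, where the last equality holds because the median minimizes the total distance. For the second, I bound $f(m)$ from below by counting only the agents weakly to the right of $r$, each at distance at least $d(m,r)$ from $m$, which gives $f(m)\ge n_R\cdot d(m,r)$.

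Finally I would combine these. For even $n$, where $n_L=n_R=n/2$, the identity and the second bound give $\SC(\bw)=f(m)+\tfrac n2 d(m,r)\le f(m)+f(m)=2f(m)\le 2\,\SC(\bo)$. For odd $n$, the identity gives $\SC(\bw)=f(m)+\tfrac{n+1}{2}d(m,r)$ while $f(m)\ge\tfrac{n-1}{2}d(m,r)$, so that $\tfrac{n+1}{2}d(m,r)\le\tfrac{n+1}{n-1}f(m)$ and hence $\SC(\bw)\le\bigl(1+\tfrac{n+1}{n-1}\bigr)f(m)=\tfrac{2n}{n-1}f(m)\le 3\,\SC(\bo)$, the last step because $\tfrac{2n}{n-1}$ is decreasing in $n$ and equals $3$ at $n=3$. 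The step I expect to be the main obstacle, and the crux of the whole argument, is identifying the right lower bound on the optimal max-cost: since the optimal max-variant solution need not place a facility at the median, one cannot mimic the sum-variant analysis, and the key realization is that the max-cost still dominates the single-point objective $f(\cdot)$, which the median in turn controls. As a sanity check, both bounds also follow from the sum-variant results via $\SC^{\max}\le\SC^{\text{sum}}\le 2\,\SC^{\max}$ together with \Cref{lem:sum:even} and \Cref{thm:sum:deterministic:upper}, but the direct computation is preferable because it produces the explicit expression for $\SC(\bw)$ that will be reused in the analysis of the randomized mechanism.
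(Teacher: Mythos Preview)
Your proof is correct and follows the same overall structure as the paper: both compute $\SC(\bw)=f(m)+n_L\,d(m,r)$, both use the lower bound $\SC(\bo)\ge f(m)$ via the median, and both finish with $\SC(\bw)\le(1+n_L/n_R)\,\SC(\bo)$. The one genuine difference is how the term $n_R\,d(m,r)$ is tied to the optimum. The paper derives a second, independent lower bound $\SC(\bo)\ge n_R\,d(m,r)$ by a case analysis on where an optimal facility sits (some $o\le m$ or some $o\ge r$, and then each agent on the opposite side pays at least $d(m,r)$). You instead observe directly that $f(m)\ge\sum_{i\ge r}d(i,m)\ge n_R\,d(m,r)$ and then recycle the first bound $\SC(\bo)\ge f(m)$. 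Your route is a bit more streamlined since it never inspects the structure of $\bo$; the paper's case analysis, by contrast, yields a lower bound on $\SC(\bo)$ that is in principle independent of $f(m)$, though in this setting both routes give the identical numerical bound.
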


\begin{proof}
Let $\bw = (m,r)$ be the solution chosen by the mechanism, and denote by $\bo$ an optimal solution.
Observe that $\cost_i(\bw) = d(i,r)$ for every $i \leq m$ and $\cost_i(\bw) = d(i,m)$ for every $i \geq r$.
Hence,
\begin{align*}
\SC(\bw) 
&= \sum_{i\leq m} d(i,r) + \sum_{i \geq r} d(i,m) = \sum_{i} d(i,m) + |\{i \leq m\}| \cdot d(m,r).
\end{align*}
For the optimal solution $\bo$, since the location of the median agent is the point that minimizes the total distance from all agents, we have that
\begin{align*}
\SC(\bo) \geq \sum_{i} d(i,m).
\end{align*}
In addition, since there are two facilities to be placed, in $\bo$ one facility must be placed at the position of some agent $o \leq m$ or $o \geq r$. 
In the former case, we have that
$$\forall i \geq r: \cost_i(\bo) \geq d(i,o) = d(i,r) + d(m,r) + d(m,o) \geq d(m,r).$$
In the latter case, we have that
$$\forall i \leq m: \cost_i(\bo) \geq d(i,o) = d(i,m) + d(m,r) + d(r,o) \geq d(m,r).$$
Since $|\{i \leq m\}| \geq |\{i \geq r\}|$ by the definition of $m$ and $r$, we have established that, in any case,  
\begin{align*}
\SC(\bo) \geq |\{i \geq r\}|\cdot d(m,r).
\end{align*}
Using these two lower bounds on the optimal social cost, we can upper-bound the social cost of $\bw$ as follows:
\begin{align*}
\SC(\bw) \leq \bigg(1 + \frac{|\{i \leq m\}| }{|\{i \geq r\}|} \bigg) \cdot \SC(\bo).
\end{align*}
When $n \geq 4$ is even, by the definition of $m$ and $r$, we have that $|\{i \leq m\}| = |\{i \geq r\}| = n/2$, leading to an approximation ratio of at most $2$. When $n \geq 3$ is odd, we have that $|\{i \leq m\}| = (n+1)/2$ and $|\{i \geq r\}| = (n-1)/2$, leading to an upper bound of $1 + |\{i \leq m\}|/|\{i \geq r\}| \leq 2n/(n-1) \leq 3$.
\end{proof}

We now show that the {\sc Median-Right} mechanism is the best possible by showing a matching lower bound of $3$ on the worst-case (over all possible instances) approximation ratio of any deterministic strategyproof mechanism. 

\begin{theorem}\label{thm:max:deterministic:lower}
The approximation ratio of any deterministic strategyproof mechanism is at least $3$. 
\end{theorem}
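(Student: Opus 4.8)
The plan is to exhibit, for any deterministic strategyproof mechanism $M$, a pair of closely related instances on which $M$ is forced to be a factor of (essentially) $3$ away from optimal. I would begin with the three-agent instance $I$ with agents at $0$, $1$, and $2$, where the only feasible solutions are $(0,1)$, $(1,2)$, and $(0,2)$. The crucial observation is that the agent at $2$ incurs cost $2$ in every solution except $(1,2)$, where its cost drops to $1$. By the left-right symmetry of $I$ (reflecting about the point $1$), I may assume without loss of generality that $M(I) \neq (1,2)$, so that the rightmost agent has cost exactly $2$ under $M$.

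Next I would perturb the instance by moving the rightmost agent from $2$ to $1+\varepsilon$ for an infinitesimal $\varepsilon > 0$, obtaining instance $J$ with agents at $0$, $1$, and $1+\varepsilon$. Here strategyproofness does the real work: if $M(J)$ were the \emph{collapsed} solution $(1,1+\varepsilon)$, then the agent whose true location is $2$ would pay only $\max\{1,1-\varepsilon\}=1$ there while paying $2$ under $M(I)$, so it would strictly profit by reporting $1+\varepsilon$ instead of its true location $2$. Hence $M(J) \in \{(0,1),(0,1+\varepsilon)\}$. A short computation of the max-variant social costs then shows both of these forced solutions have social cost $\approx 3$, whereas the optimal solution of $J$ is precisely the collapsed pair $(1,1+\varepsilon)$ with social cost $1+3\varepsilon \approx 1$; letting $\varepsilon \to 0$ drives the ratio to $3$. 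The symmetric case $M(I)=(1,2)$ is handled identically by moving the leftmost agent from $0$ to $1-\varepsilon$ and blocking the collapsed solution $(1-\varepsilon,1)$.

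The conceptual crux, and the step I expect to be the main obstacle, is seeing why this construction yields $3$ in the max-variant when the analogous argument for the sum-variant only yields $3/2$. The point is that in the max-variant two nearly coincident agents can be served almost for free by placing both facilities on them, so the optimal social cost of $J$ collapses to essentially the single outlier's contribution, while strategyproofness pins one facility at the far-left agent and thereby forces every agent near $1$ to pay the full distance $\approx 1$ to the facility at $0$. Getting the direction of the strategyproofness inequality right (the manipulator's true location is $2$, and it is the option of reporting $1+\varepsilon$ to obtain $(1,1+\varepsilon)$ that must be blocked) and confirming that the two surviving solutions are genuinely the costly ones are the only places requiring care; the remaining social-cost evaluations are routine.
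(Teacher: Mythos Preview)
Your proposal is correct and follows essentially the same approach as the paper's proof. The only cosmetic difference is that the paper moves the rightmost agent all the way to position $1$ (so that two agents coincide there and the optimal solution places both facilities at $1$ with social cost exactly $1$), whereas you move it to $1+\varepsilon$ and take $\varepsilon \to 0$; in both versions strategyproofness pins one facility at $0$ and the comparison to an optimal cost of roughly $1$ yields the ratio~$3$.
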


\begin{proof}
Consider an instance with $n=3$ agents positioned at $0$, $1$, and $2$. Since there are three possible locations for two facilities, one facility must be placed at $0$ or $2$, say $0$. Then, the cost of the agent at position $2$ is at most $2$. Now consider a new instance in which this agent moves to $1$. Due to strategyproofness, one of the facilities must still be placed at $0$ since, otherwise, the agent that moved from $2$ to $1$ would decrease its cost from $2$ to at most $1$. Hence, the social cost of the solution computed by the mechanism is $3$. On the other hand, placing the two facilities at $1$ leads to social cost $1$, and the approximation ratio is at least $3$.
\end{proof}

While no deterministic strategyproof mechanism can achieve an approximation ratio better than $3$ in general, as we have already seen in Theorem~\ref{thm:max:deterministic:upper}, the {\sc Median-Right} mechanism actually has an approximation ratio of at most $2$ when $n$ is an even number. We next show that when the number of agents $n \geq 3$ is odd (which is the worst class of instances for deterministic mechanisms), it is possible to design a randomized strategyproof mechanism with improved approximation ratio of at most $2$. In particular, we consider the following {\sc Uniform} mechanism: 
With probability $1/2$ choose the solution $(\ell,m)$, and with probability $1/2$ choose the solution $(m,r)$. This mechanism is clearly strategyproof-in-expectation as it is defines a constant probability distribution over two deterministic strategyproof mechanisms (the {\sc Median-Left} and the {\sc Median-Right}). 

\begin{theorem}\label{thm:max:randomized:upper}
For any odd $n \geq 3$, the approximation ratio of the {\sc Uniform} mechanism is at most $2$. 
\end{theorem}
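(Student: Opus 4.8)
The plan is to compute the expected social cost of the {\sc Uniform} mechanism exactly and then compare it against two simple lower bounds on $\SC(\bo)$. Write $\bw$ for the randomized solution and set $a = d(\ell,m)$ and $b = d(m,r)$. Since the mechanism is the equal mixture of the {\sc Median-Left} solution $(\ell,m)$ and the {\sc Median-Right} solution $(m,r)$, I would first reuse the social-cost computation from the proof of Theorem~\ref{thm:max:deterministic:upper}: there it was shown that $\SC(m,r) = \sum_i d(i,m) + |\{i \le m\}| \cdot b$, and the mirror-image argument gives $\SC(\ell,m) = \sum_i d(i,m) + |\{i \ge m\}| \cdot a$. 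Because $n$ is odd, $|\{i \le m\}| = |\{i \ge m\}| = (n+1)/2$, so averaging the two expressions yields
\begin{align*}
\mathbb{E}[\SC(\bw)] = \sum_i d(i,m) + \frac{n+1}{4}\,(a+b).
\end{align*}

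Next I would establish two lower bounds on the optimal social cost. The first is the standard one already used for the deterministic mechanism: since every agent's max-cost is at least its distance to either chosen facility and $m$ minimizes the total distance to all agents, $\SC(\bo) \ge \sum_i d(i,m)$. The second, and the genuinely useful one here, is a counting bound: each of the $(n-1)/2$ agents with $i \le \ell$ lies at distance at least $a$ from $m$, and each of the $(n-1)/2$ agents with $i \ge r$ lies at distance at least $b$ from $m$, so $\sum_i d(i,m) \ge \frac{n-1}{2}(a+b)$ and hence $\SC(\bo) \ge \frac{n-1}{2}(a+b)$. Combining the formula with these bounds,
\begin{align*}
\mathbb{E}[\SC(\bw)] = \sum_i d(i,m) + \frac{n+1}{4}(a+b) \le \SC(\bo) + \frac{n+1}{4}(a+b),
\end{align*}
and since $n \ge 3$ gives $\frac{n+1}{4} \le \frac{n-1}{2}$, the trailing term is at most $\frac{n-1}{2}(a+b) \le \sum_i d(i,m) \le \SC(\bo)$. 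Therefore $\mathbb{E}[\SC(\bw)] \le 2\,\SC(\bo)$, as claimed.

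The main obstacle — the only conceptual step beyond bookkeeping — is finding a lower bound that controls the ``gap'' contribution $\frac{n+1}{4}(a+b)$ in the expected cost. The deterministic analysis bounded $\SC(\bo)$ below by $\frac{n-1}{2}\cdot d(m,r)$ using that one facility must sit on one side of the median gap, but that bound sees only one of $a,b$ and so is too weak here, since {\sc Uniform} charges for both gaps symmetrically. The fix I expect to be the crux is the observation that $\sum_i d(i,m)$ on its own already dominates $\frac{n-1}{2}(a+b)$ via pure counting, capturing both gaps at once; once this is in place, the arithmetic inequality $\frac{n+1}{4} \le \frac{n-1}{2}$ for $n \ge 3$ closes the argument, with equality at $n=3$ consistent with the known matching lower bound of $2$.
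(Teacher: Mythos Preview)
Your proof is correct and follows the same overall skeleton as the paper's: compute $\mathbb{E}[\SC(\bw)] = \sum_i d(i,m) + \frac{n+1}{4}\,d(\ell,r)$, then compare against lower bounds on $\SC(\bo)$. The one genuine difference is how you obtain the bound controlling the gap term. The paper argues structurally about $\bo$: since the two facilities must occupy distinct agent locations and $m$ is the only agent strictly between $\ell$ and $r$, at least one optimal facility sits at some $o \le \ell$ or $o \ge r$; the $(n-1)/2$ agents on the far side then each pay at least $d(\ell,r)$, giving $\SC(\bo) \ge \frac{n-1}{2}\,d(\ell,r)$ directly. You instead never touch the structure of $\bo$ beyond the median bound $\SC(\bo) \ge \sum_i d(i,m)$, and extract $\sum_i d(i,m) \ge \frac{n-1}{2}(a+b)$ by pure counting. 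Your route is more elementary (no case analysis on the optimal placement) and in fact uses only a single lower bound on $\SC(\bo)$; the paper's route, by keeping the two lower bounds independent, yields the sharper ratio $\frac{3n-1}{2n-2}$ for general odd $n$, though both coincide at $2$ for $n=3$.
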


\begin{proof}
Since there is an odd number $n \geq 3$ of agents, by the definition of $m$, we have that $|\{i \geq m\}| = |\{i \leq m\}| = (n+1)/2$. 
Hence, we can write the expected social cost of the randomized solution $\bw$ chosen by the mechanism as follows: 
\begin{align*}
\mathbb{E}[\SC(\bw)]
&= \frac12 \bigg( \sum_{i\leq \ell} d(i,m) + \sum_{i \geq m} d(i,\ell) \bigg) + 
\frac12 \bigg( \sum_{i\leq m} d(i,r) + \sum_{i \geq r} d(i,m) \bigg) \\
&= \sum_{i} d(i,m) + \frac12 |\{i \geq m\}| \cdot d(\ell,m) + \frac12 |\{i \leq m\}| \cdot d(m,r) \\
&= \sum_{i} d(i,m) + \frac12 \cdot \frac{n+1}{2} \cdot d(\ell,r).
\end{align*}
For the optimal solution $\bo$, since the position of the median agent is the point that minimizes the total distance from all agents, we have that
\begin{align*}
\SC(\bo) \geq \sum_{i} d(i,m).
\end{align*}
Since there are two facilities to be placed, in $\bo$ one facility must be placed at the position of some agent $o \leq \ell$ or $o \geq r$. 
In the former case, we have that 
$$\forall i \geq r: \cost_i(\bo) \geq d(i,o) = d(i,r) + d(r,m) + d(m,\ell) + d(\ell,o) \geq d(\ell,r).$$
In the latter case, we have that 
$$\forall i \leq \ell: \cost_i(\bo) \geq d(i,o) = d(i,\ell) + d(\ell,m) + d(m,r) + d(r,o) \geq d(\ell,r).$$ 
Since $|\{i \geq r\}| = |\{i \leq \ell\}| = (n-1)/2$ by the definition of $\ell$ and $r$, we have established that
\begin{align*}
\SC(\bo) \geq \frac{n-1}{2} \cdot d(\ell,r).
\end{align*}
Using these two lower bounds on the optimal social cost, we can upper-bound the social cost of $\bw$ as follows:
\begin{align*}
\mathbb{E}[\SC(\bw)]
&\leq \bigg(1 + \frac12 \cdot \frac{n+1}{2} \cdot \frac{2}{n-1} \bigg) \cdot \SC(\bo) = \frac{3n-1}{2n-2} \cdot \SC(\bo). 
\end{align*}
Hence, the approximation ratio is at most $(3n-1)/(2n-2) \leq 2$ for $n \geq 3$. 
\end{proof}

Finally, we show $2$ is the best possible approximation ratio for any randomized strategyproof-in-expectation mechanism. 

\begin{theorem} \label{thm:max:randomized:lower}
The approximation ratio of any randomized strategyproof-in-expectation mechanism is at least $2$.
\end{theorem}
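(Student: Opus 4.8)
The plan is to adapt the deterministic lower bound of \Cref{thm:max:deterministic:lower} to the randomized setting, using the instance $I$ with three agents located at $0$, $1$, and $2$ together with \emph{two} symmetric single-agent deviations. Write $a$, $b$, $c$ for the probabilities that a given strategyproof-in-expectation mechanism $M$ assigns, on input $I$, to the three feasible solutions $(0,1)$, $(1,2)$, and $(0,2)$, so that $a+b+c=1$. A direct computation of the max-variant individual costs shows that the rightmost agent (truly at $2$) has expected cost $2a + b + 2c = 2-b$ under $M(I)$, and, symmetrically, the leftmost agent (truly at $0$) has expected cost $a + 2b + 2c = 2-a$.

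Next I would analyze the deviation in which the rightmost agent reports $1$ instead of $2$. The reported profile becomes the instance $J=(0,1,1)$, whose only two feasible solutions are $(0,1)$, with social cost $3$, and the solution that places both facilities at the common location $1$, with social cost $1$; the latter is optimal, so $\min_{\bx}\SC(\bx|J)=1$. (Here I use, exactly as in the deterministic proof, that when two agents share a location the multiset model permits both facilities to be placed there.) Letting $p$ be the probability that $M$ assigns to $(0,1)$ on input $J$, the true cost of the deviating agent is $2p + (1-p) = 1+p$, and strategyproofness-in-expectation forces $2-b \leq 1+p$, i.e. $p \geq 1-b$. The expected social cost of $M$ on $J$ is then $3p+(1-p) = 1+2p \geq 3-2b$, so the approximation ratio of $M$ is at least $3-2b$.

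The symmetric deviation, in which the leftmost agent reports $1$ to produce the instance $(1,1,2)$, yields in the same way that the approximation ratio is at least $3-2a$. Finally I would combine the two bounds: since $a+b = 1-c \leq 1$, we have $\min\{a,b\} \leq 1/2$, and therefore $\max\{3-2a,\,3-2b\} = 3-2\min\{a,b\} \geq 2$. Hence every randomized strategyproof-in-expectation mechanism has approximation ratio at least $2$ on one of the three instances $I$, $(0,1,1)$, $(1,1,2)$.

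The step I expect to carry the real weight is the decision to exploit both extreme agents rather than just one. A single deviation (say, only the agent at $2$) is insufficient, because a mechanism can set $b=1$, always outputting $(1,2)$ on $I$, which makes the deviating agent's truthful cost already minimal and renders the constraint $p \geq 1-b = 0$ vacuous. Using the left--right symmetry of $I$ to obtain the two constraints $p \geq 1-b$ and its mirror $p' \geq 1-a$ simultaneously is what pins down $\min\{a,b\}\leq 1/2$ and closes the argument. In contrast to the sum-variant lower bound, no analogue of \Cref{lem:sum:randomized-structure} is needed here, since each deviated instance already has only two feasible solutions and thus automatically satisfies the normalization used above.
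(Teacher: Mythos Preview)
Your proof is correct and follows essentially the same route as the paper: the same base instance $I=(0,1,2)$, the same deviation to a colocated instance with optimal cost $1$, and the same strategyproofness inequality forcing a facility at the extreme with probability at least $1/2$. The only cosmetic difference is that the paper invokes symmetry up front (``say $0$'') to assume the probability of a facility at $0$ is at least $1/2$ and then runs a single deviation, whereas you carry both deviations explicitly and combine them via $\min\{a,b\}\le 1/2$; note that your key quantity $1-b=a+c$ is precisely the paper's $p$.
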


\begin{proof}
We consider the same instance $I$ as in the proof of Theorem~\ref{thm:max:deterministic:lower}. So, there are three agents with locations $0$, $1$, and $2$. Since there are three possible locations for two facilities, there is probability $p \geq 1/2$ that one of the facilities will be placed at $0$ or $2$, say $0$. Then, the expected cost of the agent at position $2$ is equal to $2p + 1-p = p+1$. 

Now consider the instance $J$ in which this agent moves to $1$. If there is probability $q < p$ that a facility is placed at $0$ in $J$, then the agent would have decreased its expected cost from $p+1$ to $q+1$, which contradicts that the mechanism is strategyproof-in-expectation. Hence, one facility must be placed at $0$ with probability at least $p \geq 1/2$ in $J$, which means that the expected social cost is 
$$p \cdot \SC(0,1) + (1-p) \cdot \SC(1,1) = 3p + 1-p = 2p+1 \geq 2.$$
However, the optimal social cost is $\SC(1,1) = 1$, leading to an approximation ratio of at least $2$.
\end{proof}

\section{Deterministic Mechanisms for Multiple Facilities} \label{sec:multiple}
Having completely resolved the case of $k=2$ facilities in the previous two sections, we now consider the general case of $k$ facilities for which we present (asymptotically) tight bounds on the approximation ratio of {\em deterministic} strategyproof mechanisms. 

\subsection{Sum-variant}\label{sec:multiple:sum}
We again start with the sum-variant and first argue about the structure of the optimal solution when there are $k$ facilities to be placed.

\begin{lemma}
For the sum-variant, an optimal solution is to place the facilities at a set of consecutive agents that includes the median agent(s). 
\end{lemma}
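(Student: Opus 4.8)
The plan is to prove that placing the $k$ facilities at a block of consecutive agents containing the median agent(s) is optimal, by a standard exchange argument that shows any optimal solution can be transformed into such a configuration without increasing the social cost. Let the agents be ordered $a_1 \leq a_2 \leq \dots \leq a_n$, and suppose $\bx = (x_1, \dots, x_k)$ is an optimal solution, where each $x_j$ is the location of some agent. First I would establish the key single-facility fact, already invoked in the $k=2$ analysis: for a fixed set of facility positions, the contribution of any one facility location to the social cost is governed by the function $f(\cdot)$ (the total distance of all agents from a point), and in the sum-variant the total social cost decomposes additively as $\SC(\bx) = \sum_{j} f(x_j)$, since $\cost_i^{\text{sum}}(\bx) = \sum_{x \in \bx} d(i,x)$ and we may swap the order of summation. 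This decomposition is the crucial structural feature that makes the argument clean, and it is exactly what fails in the max-variant (explaining the counterexample given earlier in the excerpt).

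Given the additive decomposition $\SC(\bx) = \sum_{j=1}^k f(x_j)$, minimizing the social cost reduces to choosing $k$ \emph{distinct} agent locations that minimize the sum of their $f$-values. Since $f$ is convex and piecewise-linear with its minimum attained at the median agent(s) — and monotone nondecreasing as we move away from the median in either direction, i.e. $f(a_1) \geq \dots \geq f(a_m) \leq \dots \leq f(a_n)$ for a median index $m$ — the $k$ smallest values of $f$ over the agents are achieved by a set of agents forming a contiguous block around the median. I would argue this directly: if an optimal set of $k$ agents is \emph{not} a consecutive block containing the median, then there is some agent inside the convex hull of the chosen set that is not selected while some agent strictly farther from the median is selected. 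Swapping the farther agent for the nearer (unselected) one keeps the locations distinct and does not increase $\sum_j f(x_j)$ by monotonicity of $f$, contradicting the assumption or producing an equally good consecutive solution. Iterating these swaps terminates in a consecutive block, and the block must contain the median(s) since otherwise the two agents flanking the block on the median side would both have smaller $f$-value than the block endpoint farthest from the median, permitting a further improving swap.

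The main obstacle is the bookkeeping forced by the distinctness constraint: because the facilities must go to \emph{different} agent locations, the exchange argument cannot simply push a facility to the pointwise minimizer of $f$, but must instead reason about selecting the $k$ agents with the smallest $f$-values as a set. I would handle this by phrasing the swap concretely — if the selected set is not an interval of the ordering, pick the leftmost gap, meaning an unselected agent $a_p$ with selected agents on both its left and right, together with the selected agent $a_q$ that is the farthest from the median among those on the side of $a_p$ away from the median; then $f(a_p) \leq f(a_q)$ by monotonicity, so replacing $a_q$ by $a_p$ is weakly improving and strictly reduces the number of gaps. The subtlety is confirming that such a replacement is always available (equivalently that $a_q$ can be chosen on the correct side so that $a_p$ lies between the median and $a_q$ in the ordering), which follows from $f$ being nonincreasing toward the median and nondecreasing away from it; I would note the median agent(s) automatically end up in the block because $f$ is minimized there, so no swap ever removes them. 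This completes the reduction to a consecutive block containing the median(s).
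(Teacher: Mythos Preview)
Your proposal is correct and follows essentially the same approach as the paper: decompose $\SC(\bx) = \sum_j f(x_j)$ via the swap of summation order, then use the monotonicity of $f(x) = \sum_i d(i,x)$ away from the median to conclude that the $k$ smallest $f$-values over distinct agent locations form a consecutive block containing the median(s). The paper's proof is a one-line appeal to exactly this decomposition and monotonicity, whereas you additionally spell out the exchange/swap argument that makes the ``$k$ smallest values are consecutive'' step explicit.
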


\begin{proof}
Let $m$ by a median agent (note that there might be two in case the number of agents is even). The lemma follows directly by the fact that the social cost is the sum of the distances of the agents from {\em all} facility locations and the monotonicity property of the total distance function $f(x) = \sum_i d(i,x)$ that $f(x) \geq f(y) \geq f(m)$ for every $x \leq y \leq m$ or $m \leq y \leq x$.
\end{proof}

We now show our upper bound by considering a generalization of the {\sc Median-Right} mechanism that we used for $k=2$. 
If $k \geq 2$ is even, our mechanism places the facilities at the (leftmost) median agent $m$, at the $k/2-1$ agents at the left of $m$, and at the $k/2$ agents at the right of $m$ (which might include the second median agent in case of an even overall number of agents).
If $k \geq 3$ is odd, the mechanism places the facilities at the (leftmost) median agent $m$, at the $(k-1)/2$ agents at the left of $m$, and at the $(k-1)/2$ agents at the right of $m$.
We will refer to this mechanism as {\sc Median-Ball} (given that it places the facilities around the median agent within a radius of about $k/2$ in each direction). 

Since the mechanism bases its decision only on the ordering of the agents on the line, it is clearly strategyproof for the same reason that {\sc Median-Right} is strategyproof when $k=2$, so in the following we focus on bounding its approximation ratio.

\begin{theorem} \label{thm:sum:multiple:upper}
For the sum-variant, the approximation ratio of the {\sc Median-Ball} mechanism is at most $2$. 
\end{theorem}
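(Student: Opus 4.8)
The plan is to work throughout with the identity $\SC(\bx) = \sum_{x \in \bx} f(x)$, where $f(x) = \sum_i d(i,x)$ is the total-distance function already used in the proof of Lemma~\ref{thm:sum:odd}; recall that $f$ is minimized at the median agent $m$. Writing $\bw$ for the {\sc Median-Ball} solution and $\bo$ for an optimal one, the first step is the trivial lower bound $\SC(\bo) = \sum_{x \in \bo} f(x) \geq k \cdot f(m)$, since each of the $k$ facilities contributes total distance at least $f(m)$. Hence it suffices to prove the matching upper bound $\SC(\bw) \leq 2k \cdot f(m)$: combining the two inequalities immediately yields $\SC(\bw) \leq 2\,\SC(\bo)$. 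Equivalently, writing the \emph{excess} of a location $x$ as $f(x) - f(m) \geq 0$, and noting $\SC(\bw) = k \cdot f(m) + E$ with $E := \sum_{x \in \bw}(f(x) - f(m))$, the goal reduces to showing that the total excess over the ball satisfies $E \leq k \cdot f(m)$.

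The second step is to compute the excess of each ball agent exactly. For the $j$-th agent $\ell_j$ to the left of $m$, I would split the agents into those at or left of $\ell_j$, those strictly between $\ell_j$ and $m$, the median itself, and those right of $m$, and evaluate $\sum_i (d(i,\ell_j) - d(i,m))$ group by group. Using the identity $2\ell_s - \ell_j - m = d(m,\ell_j) - 2\,d(m,\ell_s)$ for the in-between agents $\ell_s$, all terms collapse to the clean formula
\begin{align*}
f(\ell_j) - f(m) = (\alpha + 2j - 1)\, d(m,\ell_j) - 2\sum_{s=1}^{j-1} d(m,\ell_s),
\end{align*}
where $\alpha = |\{i > m\}| - |\{i < m\}| \in \{0,1\}$ absorbs the parity of $n$ (and of the choice of leftmost median). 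The symmetric computation yields the analogous expression for the $j'$-th agent $r_{j'}$ to the right, with $\alpha$ replaced by $-\alpha$.

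The third step sums these excesses over the ball and bounds the result. Summing $f(\ell_j) - f(m)$ for $j = 1,\dots,J$, where $J$ is the number of left facilities (at most $k/2$), the nested sum $\sum_j \sum_{s<j} d(m,\ell_s)$ telescopes, turning the left contribution into $\sum_{j=1}^{J}(\alpha + 4j - 2J - 1)\, d(m,\ell_j)$, and likewise on the right. The crucial observation is that every coefficient appearing here is smaller than $k$: the largest occurs at $j = J$ and equals $\alpha + 2J - 1$, which is below $k$ because $J \leq k/2$ and $\alpha \leq 1$. Since each $d(m,\cdot) \geq 0$ and the terms carrying negative coefficients only decrease the sum, the total excess is at most $k$ times the sum of $d(m,\cdot)$ over the ball agents, and this is at most $k \sum_i d(i,m) = k \cdot f(m)$. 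This is precisely the bound $E \leq k \cdot f(m)$ that completes the argument.

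The step I expect to be the main obstacle is exactly this aggregate coefficient bound. It is tempting to argue facility-by-facility that $f(x) \leq 2 f(m)$ for every $x \in \bw$, but this is false in general: a large gap just inside the ball can push a single $f(\ell_j)$ well above $2 f(m)$. The telescoping is what rescues the argument, since it reveals that such far-out agents are few (there are at most $J < k/2$ of them, and their coefficients never reach $k$) while the median simultaneously guarantees that $f(m)$ is large enough to absorb them. A minor additional point to verify is the treatment of boundary instances in which fewer than $k/2$ agents lie on one side of $m$ and the ball is truncated; there the relevant coefficients only shrink, so the same bound goes through, and in the extreme case $n = k$ the ball coincides with the unique feasible solution and the ratio is trivially $1$.
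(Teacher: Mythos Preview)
Your proof is correct and takes a genuinely different route from the paper. The paper's argument works with the solution $\bw$ directly: it matches agents outside the ball in pairs (one on each side of $m$), computes the contribution of in-ball agents via the pairwise distances $d(x_\ell, y_\ell)$, and then lower-bounds $\SC(\bo)$ by invoking the structural lemma that the optimal solution is a \emph{shift} of $\bw$ (a block of consecutive agents containing the median). Your approach instead operates entirely at the level of the total-distance function $f$: you write $\SC(\bx) = \sum_{x \in \bx} f(x)$, use only the trivial bound $\SC(\bo) \geq k\,f(m)$, and reduce the whole problem to the excess inequality $E \leq k\,f(m)$, which you establish via the exact formula for $f(\ell_j) - f(m)$ and a telescoping sum.

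What this buys you: you never touch the structural lemma about the optimum, you handle the parity of $n$ uniformly through the single parameter $\alpha$, and the argument is shorter and cleaner. The paper's route, by tracking $\bo$ explicitly, retains more information (it does not discard the gap $\SC(\bo) - k f(m)$) and could in principle be sharpened, though both arguments ultimately stop at the same factor of~$2$. One small remark: your stated justification ``$J \leq k/2$ and $\alpha \leq 1$'' literally yields $\alpha + 2J - 1 \leq k$ rather than strictly less; the strict inequality does in fact hold (when $J = k/2$ the side in question is the right one, where the leading term is $-\alpha$ rather than $+\alpha$), but only $\leq k$ is needed for your bound anyway.
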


\begin{proof}
We present the proof for an odd number $k \geq 3$ of facilities; the proof is similar for even $k$.
Let $\bw = (x_{(k-1)/2}, \ldots, x_1, m, y_1, \ldots, y_{(k-1)/2})$ be the solution computed by the mechanism.
To compute the social cost of $\bw$, we first consider the agents that are not part of the solution. 
Let $S_<$ and $S_>$ be the sets of agents that are to the left of agent $x_{(k-1)/2}$ and to the right of agent $y_{(k-1)/2}$, respectively. 
Also, let $X$ be the indicator variable that is $1$ if $n$ is even and $0$ otherwise.
By definition, we have that $|S_<| = |S_>|-1$ if $X=1$, and 
$|S_<| = |S_>|$ otherwise. 
In any case, since $|S_<| \leq |S_>|$, 
we can match every agent $i \in S_<$ to an agent $\mu(i) \in S_>$ and observe that, for any $w \in \bw$,
$$d(i,w) + d(\mu(i),w) = d(i,m) + d(\mu(i),m).$$
Clearly, if the number of agents is even, there will be an agent $R \in S_>$ that is left unmatched;\footnote{Note that, if $k$ is even, there might be an agent in $S_<$ that is left unmatched instead of an agent in $S_>$.} for this agent $R$, if it exists, we use the fact that $d(x_\ell,R) = d(x_\ell,y_\ell) + d(y_\ell,R)$.
Given this, we have
\begin{align*}
   \sum_{i \not\in \bw} \cost_i(\bw) 
   &= \sum_{i \in S_<} \bigg( \cost_i(\bw) + \cost_{\mu(i)}(\bw) \bigg) + X \cdot \cost_R(\bw) \\
   &= \sum_{i \in S_<} \sum_{w \in \bw} \bigg( d(i,m) + d(\mu(i),m) \bigg) + X\cdot \sum_{w \in \bw} d(R,w) \\
   &= k \cdot \sum_{i \not\in \bw \cup \{R\}} d(i,m) + X \cdot \bigg( \sum_{\ell=1}^{(k-1)/2} d(x_\ell,y_\ell) + 2 \sum_{\ell=1}^{(k-1)/2} d(R,y_\ell) + d(R,m) \bigg).
\end{align*}
Next, we consider the agents that are part of the solution $\bw$ and the distances between them. 
Consider any two agents $x, y \in \bw$ between which there are $t$ different agents.
For each such agent $i \in (x,y)$, we need to take into account the distance of $x$ to $i$, the distance of $i$ to $x$, the distance of $y$ to $i$, and the distance of $i$ to $y$. All together, these distances are exactly
$$2 \left( d(x,i) + d(i,y) \right) = 2\cdot d(x,y).$$ 
Accounting for the agents $x$ and $y$ as well, we have that the contribution of the distances of all agents in $[x,y]$ to the social cost is 
$$(2t+2) \cdot d(x,y).$$
We can now use this observation for all pairs of agents $(x_\ell, y_\ell)$ for $\ell \in [(k-1)/2]$ (note that by doing this we will have calculated the distances of all agents in $\bw$ from all agents in $\bw$, including $m$).  
Since there are $2\ell-1$ agents between $x_\ell$ and $y_\ell$, the distance $d(x_\ell,y_\ell)$ has a coefficient of $4\ell$ 
in the social cost.\footnote{If $k$ is even, for any $(x_\ell, y_\ell)$ for $\ell \in [k/2]$ with $x_1=m$, there are $2\ell-2$ agents between $x_\ell$ and $y_\ell$, leading to a coefficient of $4\ell-2$ for the distance $d(x_\ell,y_\ell)$.} 
Hence, 
\begin{align*}
     \sum_{i \in \bw} \cost_i(\bw)  = \sum_{\ell=1}^{(k-1)/2} 4\ell \cdot d(x_\ell,y_\ell) \leq 2(k-1) \sum_{\ell=1}^{(k-1)/2} d(x_\ell,y_\ell).
\end{align*}
Putting everything together, we have
\begin{align}\label{eq:sum:multiple:mech}
\SC(\bw) 
&\leq k \cdot \sum_{i \not\in \bw \cup \{R\}} d(i,m) + X \cdot \bigg(  2 \sum_{\ell=1}^{(k-1)/2} d(R,y_\ell) + d(R,m) \bigg) \nonumber \\
&\quad + (2k-2+X) \cdot \sum_{\ell=1}^{(k-1)/2} d(x_\ell,y_\ell).
\end{align}

We now focus on bounding the optimal social cost. By Lemma~\ref{sec:multiple:sum}, the optimal solution $\bo$ can be thought of as a shift of $\bw$ towards the left or the right. We will only consider the case where the shift is towards the right; the other case can be handled similarly and is simpler since the agent $R$, if it exists, will have larger cost in the optimal solution, thus leading to a smaller bound on the approximation ratio. 
We again start by considering the agents that are not part of the solution $\bw$. 
As before, consider the same matching $\mu$ of the agents in $S_<$ to the agents in $S_>$. 
Let $o \in \bo$ be some agent that is part of the optimal solution. 
For any agent $i \in S_<$ such that $o \leq \mu(i)$, we have that
$$d(i,o) + d(\mu(i),o) = d(i,m) + d(\mu(i),m).$$
On the other hand, for any agent $i \in S_<$ such that $\mu(i) < o$,
$$d(i,o) = d(i,\mu(i)) + d(\mu(i),o) \geq d(i,m) + d(\mu(i),m).$$
Therefore, 
\begin{align*}
\sum_{i \not\in \bw} \cost_i(\bo) 
&= \sum_{i \in S_<} \bigg( \cost_i(\bo) + \cost_{\mu(i)}(\bo) \bigg) + X \cdot \cost_R(\bo) \\
&= \sum_{i \in S_<} \sum_{o \in \bo} \bigg( d(i,m) + d(\mu(i),m) \bigg) + X \cdot \sum_{o \in \bo} d(R,o) \\
&\geq k \cdot \sum_{i \not\in \bw \cup \{R\}} d(i,m) + X \cdot \bigg( \sum_{\ell=1}^{(k-1)/2} d(R,y_\ell) + d(R,m) \bigg). 
\end{align*}
Next, consider agent $x_\ell$ for $\ell \in [(k-1)/2]$ and let $o \in \bo$. 
If $o \leq y_\ell$, then
$$d(x_\ell,o) + d(o,y_\ell) = d(x_\ell,y_\ell),$$
Otherwise, if $o > y_\ell$, then
$$d(x_\ell,o) = d(x_\ell,y_\ell) + d(y_\ell,o) > d(x_\ell,y_\ell).$$
Hence, we overall have that
\begin{align*}
     \sum_{i \in \bw} \cost_i(\bo) \geq \sum_{\ell=1}^{(k-1)/2} \sum_{o \in \bo} \bigg( d(x_\ell,o) + d(y_\ell,o) \bigg)  \geq k \cdot \sum_{\ell=1}^{(k-1)/2} d(x_\ell,y_\ell).
\end{align*}
Putting everything together, we have
\begin{align*}
\SC(\bo) 
&\geq k \cdot \sum_{i \not\in \bw \cup \{R\}} d(i,m) + X \cdot \bigg( \sum_{\ell=1}^{(k-1)/2} d(R,y_\ell) + d(R,m) \bigg) + k \cdot \sum_{\ell=1}^{(k-1)/2} d(x_\ell,y_\ell) \\
&\geq X \cdot \sum_{\ell=1}^{(k-1)/2} d(R,y_\ell) + k \cdot \sum_{\ell=1}^{(k-1)/2} d(x_\ell,y_\ell).
\end{align*}
It is now not hard to observe that \eqref{eq:sum:multiple:mech} implies
\begin{align*}
\SC(\bw) &\leq \SC(\bo) + X \cdot \sum_{\ell=1}^{(k-1)/2} d(R,y_\ell) +  (k-2+X) \cdot \sum_{\ell=1}^{(k-1)/2} d(x_\ell,y_\ell)  \\
&\leq \SC(\bo) + X \cdot \sum_{\ell=1}^{(k-1)/2} d(R,y_\ell) +  k \cdot \sum_{\ell=1}^{(k-1)/2} d(x_\ell,y_\ell) \\
&\leq 2 \cdot \SC(\bo). 
\end{align*}
This completes the proof. 
\end{proof}

We next provide an asymptotically tight lower bound of $2-1/k$ using a construction that is a generalization of the one in the proof of Theorem~\ref{thm:sum:deterministic:lower} for $k=2$, where instead of having just $3$ agents, we now have $k+1$. 

\begin{theorem} \label{thm:sum:multiple:lower}
For the sum-variant, when there are $k$ facilities to locate, the approximation ratio of any deterministic strategyproof mechanism is at least $2-1/k$.
\end{theorem}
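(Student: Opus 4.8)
The plan is to generalize the three‑agent instance of Theorem~\ref{thm:sum:deterministic:lower} to an instance on $k+1$ agents consisting of a single ``anchor'' agent sitting a unit away from a tight cluster of the remaining $k$ agents. Concretely, I would take the instance $I$ with one agent at $0$, one agent at $2$, and $k-1$ agents packed into the tiny interval $[1,\,1+(k-2)\varepsilon]$ for an infinitesimal $\varepsilon>0$; this is the natural analogue of the points $0,1,2$ used for $k=2$. Since $k$ facilities must occupy $k$ of the $k+1$ distinct reported locations, exactly one location is left empty, so at least one of the two extreme positions $0$ and $2$ must carry a facility. I would then split into the two (non‑symmetric) cases accordingly.

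Suppose first that $0$ carries a facility in $I$. I would move the agent reporting $2$ into the right end of the cluster, say to $1+(k-1)\varepsilon$, producing the instance $J$ in which all $k$ non‑anchor agents are clustered around $1$. The key step is to invoke strategyproofness to force a facility to remain at $0$ in $J$: if $0$ were instead left empty, the $k$ facilities would occupy exactly the cluster positions, and I would compare the cost that the deviating agent (true location $2$) incurs in $J$ with the \emph{smallest} cost it could incur in $I$ subject to a facility being pinned at $0$. Both quantities equal $k$ up to an $O(\varepsilon)$ correction, and the crux is that the correction is strictly more favourable in $J$: $J$ contains one extra cluster point, slightly nearer to $2$ than the pinned facility at $0$ that it effectively displaces, yielding a strict gain of order $(k-1)\varepsilon$ and hence contradicting strategyproofness. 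The mirror case, in which $2$ carries a facility and the agent at $0$ is moved into the cluster, is even cleaner: there the ``$0$ is empty'' assumption pins the \emph{entire} solution of $I$, so the deviating agent's cost is computed exactly and the contradiction ($q'\ge 2$ for a cluster point $q'<2$) is immediate.

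With a facility forced at the far anchor in $J$, the ratio is a direct computation. The optimal solution places the $k$ facilities on the cluster and leaves the anchor empty, giving social cost $k+o(1)$, since only the anchor pays (a distance of about $1$ to each of the $k$ facilities). The forced solution instead keeps a facility at the anchor and drops one cluster point: now the anchor pays about $k-1$ while each of the $k$ cluster agents pays about $1$ for the anchor's facility, for a total of about $2k-1$. Letting $\varepsilon\to 0$ drives the ratio to $(2k-1)/k=2-1/k$, and since the approximation ratio is a supremum over instances, the bound $2-1/k$ follows.

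I expect the main obstacle to be the strategyproofness step in the first case. Unlike the $k=2$ argument, where the bound on the deviating agent's cost in $I$ was the crude single‑facility estimate, here a facility pinned at $0$ still leaves the mechanism free to place the other $k-1$ facilities; one must therefore take the minimum over all such placements (which will include a facility at the agent's own location $2$) and show it still strictly exceeds the all‑cluster cost achievable in $J$. This is precisely where the cluster must be positioned carefully — all points at least $1$, with the moved agent landing at the rightmost cluster point — so that the competing $O(\varepsilon)$ terms break in the desired direction.
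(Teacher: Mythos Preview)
Your proposal is correct and follows essentially the same approach as the paper: an instance with one agent at $0$, one at $2$, and $k-1$ agents clustered near $1$, followed by a strategyproofness argument that pins a facility at an extreme point after the far agent is moved into the cluster. The paper handles the $\varepsilon$-accounting more loosely (implicitly using two scales and appealing to symmetry to treat only the ``facility at $0$'' case), whereas you work with a single spacing parameter, place the moved agent at the next lattice point $1+(k-1)\varepsilon$, and explicitly verify the $(k-1)\varepsilon$ gap that drives the contradiction; the end computation of the ratio $(2k-1)/k$ is identical.
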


\begin{proof}
Consider an instance with $n=k+1$ agents with one agent at $0$, $k-1$ agents at $1$ (or very close to $1$) and one agent at $2$. Since not all facilities can be placed at $1$, at least one of them has to be placed $0$ or $2$, say $0$. Then, the cost of the agent $i$ that is located at $2$ is at least $k$ (in particular, the cost of $i$ is $2+k-1 = k+1$ if no facility is placed at $2$, and $2+k-2=k$ if a facility is placed at $2$).

Now consider a new instance in which $i$ has moved to $1+\varepsilon$ for some infinitesimal $\varepsilon > 0$. Due to strategyproofness, the mechanism must place one of the facilities at $0$ as well. Otherwise, agent $i$ would have cost $k-\varepsilon$ according to its position in the original instance, and would thus prefer to misreport its position as $1+\varepsilon$ instead of $2$. So, in the new instance, the social cost of any possible solution that is restricted to having a facility at $0$ is approximately $k-1 + k = 2k-1$, while the social cost of the remaining solution is only $k$, leading to an approximation ratio of $2-1/k$. 
\end{proof}

\subsection{Max-variant} \label{sec:multiple:max}
For the max-variant, we will show a tight bound of $k+1$ on the approximation ratio of deterministic strategyproof mechanisms. The upper bound again follows by the {\sc Median-Ball} mechanism; note the upper bound of $2$ on the approximation ratio of {\sc Median-Ball} for the sum-variant immediately implies an upper bound of $2k$ for the max-variant, which however is not the best possible we can show.  

\begin{theorem} \label{thm:max:multiple:upper}
For the max-variant, the approximation ratio of the {\sc Median-Ball} mechanism is at most $k+1$.
\end{theorem}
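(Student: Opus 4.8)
The plan is to exploit a simplification that is special to the max-variant: for any solution the cost of each agent equals its distance to the \emph{farther} of the two extreme (leftmost and rightmost) facilities, so the social cost depends only on these two extremes and is completely insensitive to the $k-2$ interior facilities. Writing $a$ and $b$ for the leftmost and rightmost facilities placed by {\sc Median-Ball} (for odd $k$, the $(k-1)/2$-th agents on either side of the median $m$) and $c=(a+b)/2$ for their midpoint, the elementary identity $\max\{d(i,a),d(i,b)\}=d(i,c)+\tfrac12 d(a,b)$ (the max distance to two points symmetric about $c$ is the distance to $c$ plus half their separation) gives the exact expression
$$\SC(\bw)=\sum_i d(i,c)+\frac{n}{2}\, d(a,b).$$
I would then pass to $m$ via the fact that $m$ minimizes the total-distance function $f(x)=\sum_i d(i,x)$ together with $|c-m|\le d(a,b)/2$, yielding the working bound $\SC(\bw)\le \sum_i d(i,m)+n\,d(a,b)$.

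For the optimal solution $\bo$, with extreme facilities $p\le q$, I would prove two lower bounds. The first, $\SC(\bo)\ge\sum_i d(i,m)$, follows exactly as in the earlier lemmas, since $\max\{d(i,p),d(i,q)\}\ge\tfrac12\big(d(i,p)+d(i,q)\big)$ and the median minimizes the sum of distances. The second is a counting bound on the span: the open interval $(a,b)$ contains exactly $k-2$ agents, so any feasible solution, placing facilities at $k$ distinct agent locations, must put at least two facilities outside $(a,b)$, hence at least one at a location $\le a$ or at least one at a location $\ge b$. In the first case every agent located at or to the right of $b$ has cost at least $d(a,b)$ in $\bo$, and in the second case every agent at or to the left of $a$ does; counting these agents gives $\SC(\bo)\ge C\cdot d(a,b)$, where $C$ is a lower bound on the number of agents at or beyond the relevant window endpoint.

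Combining the two lower bounds with the bound on $\SC(\bw)$ gives $\SC(\bw)\le \SC(\bo)+n\,d(a,b)\le \big(1+\tfrac{n}{C}\big)\SC(\bo)$, so it remains to check $C\ge n/k$, which delivers the target ratio $1+n/C\le k+1$. Since {\sc Median-Ball} is centered at the median, the window leaves about $(n-k)/2$ agents beyond each endpoint; for odd $n$ one has $C=(n-k+2)/2$, and the required inequality reduces to $(k-2)(n-k)\ge 0$, which holds for all $k\ge 2$ and $n\ge k$.

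The main obstacle I anticipate is pinning the constant down to exactly $k+1$ rather than merely $O(k)$. The step $\sum_i d(i,c)\le\sum_i d(i,m)+\tfrac{n}{2}d(a,b)$ is wasteful, and the counting bound degrades for even $n$, where the leftmost-median window is asymmetric and the guaranteed count $C$ drops by one, so that $C\ge n/k$ can fail for small even $n$. I expect that closing this gap requires either a sharper, parity-sensitive evaluation of $f(c)-f(m)=\sum_i\big(d(i,c)-d(i,m)\big)$ that uses the near-flatness of $f$ around the median, or a separate treatment of even $n$ (as was done for $k=2$ in Theorem~\ref{thm:max:deterministic:upper}), together with a matching-based exact accounting of the window agents' costs in the spirit of the proof of Theorem~\ref{thm:sum:multiple:upper}.
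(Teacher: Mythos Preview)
Your overall strategy matches the paper's: upper-bound $\SC(\bw)$ by $\sum_i d(i,m)$ plus a multiple of the window width, lower-bound $\SC(\bo)$ by both $\sum_i d(i,m)$ and by a multiple of the window width via the observation that some optimal facility must fall outside the open window $(\ell,r)$ (your $(a,b)$), and combine. Your two lower bounds on $\SC(\bo)$ are exactly the ones the paper uses.

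The gap you anticipate is real and is localized entirely at the step $\sum_i d(i,c)\le\sum_i d(i,m)+\frac{n}{2}\,d(a,b)$: the midpoint identity is exact and elegant, but this crude triangle inequality leaves you with coefficient $n$ on $d(\ell,r)$, and then $1+n/C$ exceeds $k+1$ whenever the window is asymmetric and $n$ is small (e.g.\ $n=4$, $k=3$ gives $C=1$ and ratio $5$). The paper avoids the midpoint detour and simply bounds each agent's cost by the triangle inequality through $m$: $d(i,r)\le d(i,m)+d(m,r)$ for $i\le\ell$, $d(i,\ell)\le d(i,m)+d(\ell,m)$ for $i\ge r$, and $\max\{d(i,\ell),d(i,r)\}\le d(i,m)+\max\{d(\ell,m),d(m,r)\}$ for the $k-2$ interior agents. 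Summing yields
\[
\SC(\bw)\le\sum_i d(i,m)+\Big(\max\big\{|\{i\le\ell\}|,|\{i\ge r\}|\big\}+k-2\Big)\,d(\ell,r),
\]
and since the window is median-centered one has $\max\le\min+1$ and $\min\ge 1$, which gives $1+\frac{\min+k-1}{\min}\le k+1$ uniformly across all parities of $n$ and $k$. So the fix is simpler than the parity-sensitive or matching-based refinements you propose: drop the midpoint and bound the three groups of agents separately.
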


\begin{proof}
Let $\ell$ and $r$ be the leftmost and rightmost agents in the solution $\bw$ computed by the mechanism. 
By the definition of $\bw$, we have that $||\{i \leq \ell\}| - |\{i \geq r\}|| \leq 1$. 
Since the individual cost of any agent $i$ is the distance to its farthest facility, we have
\begin{align*}
\cost_i(\bw) =
\begin{cases}
d(i,r) & \text{if } i \leq \ell \\
\max\{d(i,\ell), d(i,r)\} & \text{if } i \in \bw \setminus \{\ell,r\} \\
d(i,\ell) & \text{if } i \geq r. 
\end{cases}
\end{align*}
Given this, and using the fact that $d(i,x) \leq d(i,m) + d(m,x)$ for any $x \in \{\ell,r\}$, we can bound the social cost of $\bw$ as
\begin{align*}
\SC(\bw) 
&= \sum_{i \leq \ell} d(i,r) + \sum_{i \in \bw \setminus \{\ell,r\}} \max\{d(i,\ell), d(i,r)\} +  \sum_{i \geq r} d(i,\ell) \\
&\leq \sum_i d(i,m) + |\{i \leq \ell\}| \cdot d(m,r) + (k-2) \cdot \max\{d(\ell,m), d(m,r)\} +  |\{i \geq r\}| \cdot d(\ell,m) \\
&\leq \sum_i d(i,m) + \bigg( \max\big\{ |\{i \leq \ell\}|, |\{i \geq r\}| \big\} + k - 2 \bigg) \cdot  d(\ell,r).
\end{align*}

We now bound the social cost of an optimal solution $\bo$. 
Since the location of the median agent $m$ minimizes the total distance of all agents, if we were allowed to place the facilities at the same location, we would place all $k$ facilities at $m$ to minimize the social cost. Since this is not allowed in our model, the optimal social cost is larger than that, and we obtain
\begin{align*}
    \SC(\bo) \geq \sum_i d(i,m). 
\end{align*}
In addition, since $\bw$ is not optimal (as otherwise the approximation ratio would be $1$), at least one facility must be placed at an agent $o$ that is weakly to the left of $\ell$ or weakly to right of $r$. Let $S$ be the set of agents that are not part of the solution $\bw$ and are on the opposite side of $o$; that is, $S = \{i \geq r\}$ if $o \leq \ell$ and $S = \{r \leq \ell\}$ if $o \geq r$. For each agent $i \in S$, we have that
\begin{align*}
    \cost_i(\bo) \geq d(i,o) \geq d(\ell,r),
\end{align*}
which implies 
\begin{align*}
    \SC(\bo) \geq |S|\cdot d(\ell,r) \geq  \min\big\{ |\{i \leq \ell\}|, |\{i \geq r\}| \big\} \cdot d(\ell,r).
\end{align*}
Putting everything together, we have that
\begin{align*}
    \SC(\bw) \leq \bigg( 1 + \frac{\max\big\{ |\{i \leq \ell\}|, |\{i \geq r\}|\big\} + k-2}{\min\big\{ |\{i \leq \ell\}|, |\{i \geq r\}|\big\}} \bigg) \cdot \SC(\bo).
\end{align*}
Since $\max\big\{ |\{i \leq \ell\}|, |\{i \geq r\}|\big\} \leq \min\big\{ |\{i \leq \ell\}|, |\{i \geq r\}|\big\} + 1$ and $\min\big\{ |\{i \leq \ell\}|, |\{i \geq r\}|\big\} \geq 1$, we obtain an upper bound of $k+1$ on the approximation ratio. 
\end{proof}

We conclude the presentation of our technical results with a matching lower bound of $k+1$ on the approximation ratio of deterministic mechanisms for the max-variant, thus completely resolving this setting. 

\begin{theorem} \label{thm:max:multiple:lower}
For the max-variant, when there are $k$ facilities to locate, the approximation ratio of any deterministic strategyproof mechanism is at least $k+1$.
\end{theorem}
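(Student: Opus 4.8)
The plan is to generalize the three-agent construction from the proof of Theorem~\ref{thm:max:deterministic:lower} to $k+1$ agents, exactly as Theorem~\ref{thm:sum:multiple:lower} generalizes Theorem~\ref{thm:sum:deterministic:lower} for the sum-variant. First I would fix an instance $I$ with one agent at $0$, $k-1$ agents at $1$, and one agent at $2$. Since only $k-1$ agents report location $1$ while $k$ facilities must be placed at distinct agents, not all facilities can be placed at $1$, so at least one facility is placed at $0$ or at $2$. The instance is symmetric about $1$, so without loss of generality I would assume a facility is placed at $0$. Then the agent located at $2$ has its farthest facility at $0$, so its individual cost is exactly $d(2,0)=2$.

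The key step is the strategyproofness argument. I would let the agent at $2$ misreport its location as $1+\varepsilon$ for an infinitesimal $\varepsilon>0$, obtaining an instance $J$ with one agent at $0$ and $k$ agents near $1$. If the mechanism did not keep a facility at $0$ in $J$, then all facilities would lie near $1$ and the farthest facility from the true location $2$ would be at distance about $1<2$; the deviating agent would thus strictly reduce its cost, contradicting strategyproofness. Therefore the mechanism must place a facility at $0$ on instance $J$ as well.

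Next I would evaluate the two social costs on $J$. Because a facility sits at $0$ and the only other available points are near $1$, the mechanism's solution places one facility at $0$ and the remaining $k-1$ facilities at $k-1$ of the agents near $1$. Every agent near $1$ then has its farthest facility at $0$, incurring cost about $1$, and the agent at $0$ has its farthest facility near $1$, also incurring cost about $1$; summing over all $k+1$ agents gives social cost approximately $k+1$. In contrast, the optimal solution places all $k$ facilities at the $k$ agents near $1$, so each of those agents has cost $0$ and only the agent at $0$ contributes, for an optimal social cost of about $1$. The ratio of these quantities tends to $k+1$.

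The main obstacle---though it is more a matter of care than of genuine difficulty---is the strategyproofness step together with the exact max-cost bookkeeping: one must verify that forcing a facility at $0$ makes every one of the $k+1$ agents incur cost (approximately) $1$ under the max objective, so that the numerator is about $k+1$ while the optimum stays at about $1$. Unlike the sum-variant, where a facility at $0$ contributes only to the agents' additive totals, here the single far facility simultaneously determines the max-cost of all agents, which is precisely what drives the bound from $2-1/k$ up to $k+1$.
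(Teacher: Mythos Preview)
Your proposal is correct and follows essentially the same argument as the paper: the same $(k+1)$-agent instance, the same symmetry reduction to a facility at $0$, the same strategyproofness step forcing a facility at $0$ after the rightmost agent moves, and the same $k+1$ versus $1$ social-cost comparison. The only cosmetic difference is that the paper moves the agent from $2$ to exactly $1$ (which already yields a strict cost decrease under the max objective, unlike in the sum-variant), whereas you move it to $1+\varepsilon$ and take a limit; both work.
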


\begin{proof}
Consider an instance with $n=k+1$ agents with one agent at $0$, $k-1$ agents at $1$ (or very close to $1$) and one agent at $2$. Since not all facilities can be placed at $1$, at least one of them has to be placed at $0$ or $2$, say $0$. Then, the cost of the agent $i$ that is located at $2$ is $2$.

Now consider a new instance in which $i$ has moved to $1$. Due to strategyproofness, the mechanism must place one of the facilities at $0$ as well. Otherwise, if all facilities are placed at $1$, agent $i$ would have cost $1$ according to its position in the original instance, and would thus prefer to misreport its position as $1$ instead of $2$. So, in the new instance, the social cost of the solution chosen by the mechanism is $k+1$, while the social cost of the solution that places all facilities at $1$ is just $1$, leading to an approximation ratio of $k+1$. 
\end{proof}

\section{Conclusion and Open Problems} \label{sec:open}
In this work, we showed tight bounds on the best possible approximation ratio of deterministic and randomized strategyproof mechanisms for the heterogeneous two-facility location game where the facilities can be placed at the reported agent locations and the individual cost of an agent is either its distance from both facilities or its distance to the farthest facility. We believe there are many directions for future work. In terms of our results, it would be interesting to close the gap between $2-1/k$ and $2$ for the sum-variant and multiple facilities, consider randomized mechanisms, as well as refine our approximation ratio bounds as functions of the number $n$ of agents. One can also consider the alternative model of homogeneous facilities where the cost of agent is its distance from its closest facility. Furthermore, one can generalize the agent-constrained model in multiple dimensions, for example, by considering agents that might have different preferences over the facilities (such as optional or fractional preferences), and the efficiency of mechanisms is measured by objective functions beyond the social cost (such as the egalitarian cost, or the more general family of $\ell$-centrum objectives). 

\subsection*{Acknowledgments}
Argyrios Deligkas is supported by the UKRI EPSRC grant EP/X039862/1.

\bibliographystyle{plainnat}
\bibliography{references}

\end{document}